\newcommand{\eps}{\varepsilon}
\newcommand{\laplacenoise}{\text{Lap}}
\newcommand{\calA}{\mathcal{A}}
\newcommand{\cost}{\text{cost}}
\newcommand{\diam}{\text{diam}}
\newcommand{\dist}{\text{dist}}
\newcommand{\globalS}{\text{OPT}}
\newcommand{\opt}{\text{OPT}}
\newcommand{\calD}{\mathcal{T}}
\newcommand{\badcut}{\alpha_C}
\newcommand{\badcutF}{\alpha_F}
\newcommand{\E}{\mathbb{E}}
\newcommand{\R}{\mathbb{R}}
\newcommand{\calB}{\mathcal{B}}
\newcommand{\bcc}{\text{ badly-cut}}
\newcommand{\poly}{\text{poly}}
\newcommand{\skin}{\texttt{SKYNTYPE}\xspace}
\newcommand{\shuttle}{\texttt{SHUTTLE}\xspace}
\newcommand{\higgs}{\texttt{HIGGS}\xspace}
\newcommand{\synth}{\texttt{SYNTHETIC}\xspace}
\newcommand{\covtype}{\texttt{COVERTYPE}\xspace}
\newtheorem{lemma}{Lemma}
\newtheorem{theorem}{Theorem}
\newtheorem{fact}{Fact}
\title{Scalable Differentially Private Clustering via Hierarchically Separated Trees}
\author[1]{Vincent Cohen-Addad\footnote{All authors contributed equally to this work.}}
\author[1]{Alessandro Epasto} 
\author[1]{Silvio Lattanzi} 
\author[1]{Vahab Mirrokni}
\author[1]{Andres Munoz}
\author[2]{David Saulpic}
\author[3]{Chris Schwiegelshohn}
\author[1]{Sergei Vassilvitskii}
\affil[1]{Google Research}
\affil[2]{Sorbonne Université, LIP6, France}
\affil[3]{Aarhus University, Denmark}
\date{}
\begin{document}
\maketitle

\begin{abstract}
We study the private $k$-median and $k$-means clustering problem in $d$ dimensional Euclidean space. 
By leveraging tree embeddings, we give an efficient and easy to implement algorithm, that is empirically competitive with state of the art non private methods. 
We prove that our method computes a solution with cost at most $O(d^{3/2}\log n)\cdot OPT + O(k d^2 \log^2 n / \epsilon^2)$, where $\epsilon$ is the privacy guarantee. (The dimension term,  $d$, can be replaced with $O(\log k)$ using standard dimension reduction techniques.) Although the worst-case guarantee is worse than that of state of the art private clustering methods, the algorithm we propose is practical, runs in near-linear, $\tilde{O}(nkd)$, time and scales to tens of millions of points. We also show that our method is amenable to parallelization in large-scale distributed computing environments. In particular we show that our private algorithms can be implemented in logarithmic number of MPC rounds in the sublinear memory regime.
Finally, we complement our theoretical analysis with an empirical evaluation demonstrating the algorithm's efficiency and accuracy in comparison to other privacy clustering baselines.
\end{abstract}

\section{Introduction}
Clustering is a central problem in unsupervised learning with many applications
such as duplicate detection, community detection, computational
biology and many others. Several formulations of clustering problems
have been studied throughout the years. Among these,
the geometric versions of the problem have attracted a lot of
attention for their theoretical and practical importance.
In those problems we are given as input $n$ points and the objective is to 
put points that are close in the same cluster and far away points in different clusters.
Classic formulations of geometric clustering problem are $k$-means, 
$k$-median, and $k$-center.
Due to their relevance and their many practical applications, the 
problems have been extensively
studied and many algorithms~\cite{ahmadian2017better, arthur2007k,  byrka2014improved,  kanungo2004local, jain2003greedy, li20111} and
heuristics~\cite{lloyd1982least} have been proposed to solve
the classic version of these problems.

In this paper, we study these problems through a differential privacy lens. {\em Differential Privacy (DP)} has emerged as a de facto standard for capturing user privacy~\cite{dwork}. It is characterized by the notion of neighboring datasets, $X$ and $X'$, generally assumed to be differing on a single user's data.  An algorithm $\calA$ is $\epsilon$-differentially private if the probability of observing any particular outcome $S$ when run on $X$ vs $X'$ is bounded:
$Pr[\calA(X) = S] \leq$ $ e^{\epsilon} Pr[\calA(X') = S].$

For an introduction, see the book by Dwork and Roth~\cite{dwork}. At a high level, DP forces an algorithm to not focus on any individual training example, rather capturing global trends present in the data. 

\paragraph{Private Clustering}
Differentially private clustering has been well studied, with a number of works giving polynomial time approximately optimal algorithms for different versions of the problem, including  $k$-median and $k$-means~\cite{balcan,anamayclustering,badih_approximation,badih_local, clustering_with_convergence}. From an analysis standpoint, any approximately optimal differentially private algorithm must pay both a multiplicative as well as an additive approximation. In other words, the cost of any algorithm solution, $\textsc{Alg}$, will satisfy  $\textsc{Alg} \leq \alpha \textsc{OPT} + \beta$, for some $\alpha, \beta > 0$, where $\textsc{OPT}$ denotes the optimal solution cost. 

All else being equal, we aim for algorithms that minimize $\alpha$ and $\beta$, and recent work ~\cite{badih_approximation} has made a lot of progress in that direction. However, in a push to minimize $\alpha$ and $\beta$, algorithms often pay in added complexity and running time. In fact, all known differentially private clustering algorithms with theoretical guarantees have this shortcoming: they have superlinear running times and do not scale to large datasets, even though the large data regime is precisely the one for which using private methods is particularly important. Hence there is a big gap between optimal algorithms in theory and those that can be used in practice. 

\paragraph{Previous Work.} 
In terms of approximation guarantee, the result of Ghazi et al.~\cite{badih_approximation} is impressive: they show that it is possible to get privately the same approximation factor as the best non-private algorithm. This concludes a long line of work (see e.g. \cite{FeldmanFKN09, balcan, StemmerK18}) that focused on approximation guarantee, but not really on practical algorithms. Furthermore, those algorithm attempt to minimize the multiplicative approximation factor, dropping the additive term: instead, in the hope to improve the practical guarantee, Jones et al.~\cite{JonesNN21} and later Nguyen et al~\cite{anamayclustering} proposed an algorithm with (large) constant multiplicative error, but with additive error close to the optimal one ($O(k\log n)$ for $k$-median, $O(k\log n + k\sqrt d)$ for $k$-means). They implement their algorithm for $k$-means, showing guarantees comparable to \cite{balcan}, and quite far from the results of the non-private Lloyd's algorithm. Further, those algorithms all have super linear running time, and do not scale nicely to large datasets. 

On the opposite side, an algorithm for was recently described by Chang and Kamath.\footnote{\url{https://ai.googleblog.com/2021/10/practical-differentially-private.html}} This algorithm seems to perform extremely well in practice, with result close to the non-private $k$-means++, and can be implemented in a distributed environment to handle large scale dataset. However, this algorithm has no theoretical guarantee. 

\paragraph{Our Results and Techniques.} 
In this paper, we aim to address the above shortcomings, and design practical differentially private clustering algorithms, with provable approximation guarantees, that are fast, and amenable to scale up via parallel implementation. 
Toward this goal, we take an approach that has been successful in the non-private clustering literature. While there are constant-approximate algorithms for $k$-median and $k$-means, see \cite{ahmadian2017better,kanungo2004local, jain2003greedy}, most practical implementations use the $k$-means++ algorithm of~\cite{arthur2007k}, which has a $\Theta(\log k)$ approximation ratio. The reason for the success of $k$-means++ is two-fold. First, it is fast, running in linear time, and second, it performs well empirically despite the logarithmic worst-case guarantee. The methods we introduce in this work, while different from $k$-means++, have the same characteristics: they are fast, and perform {\em much} better than their worst-case guarantees, significantly outperforming all other implementations. In particular, they run in near-linear time, are amenable to parallel implementation in logarithmic number of rounds, and output high-quality private clusters in practice.

 Our first contribution is an efficient and scalable algorithm for differentially private $k$-median. Our starting point is an embedding of the input points into a tree using a randomly-shifted quadtree (sometimes called HST for Hierarchically Separated Tree)\footnote{We note here that this technique has already been used in prior work for private clustering~\cite{balcan} to find a  $\tilde O(n)$-sized set of candidate centers, to then run a polynomial time local search algorithm. In contrast, we use this structure to directly compute a solution.}. It is well known~\cite{FakcharoenpholRT03} that such tree embeddings can approximately preserve pairwise distances. Our key insight is that it is possible to truncate the tree embedding so that leaves represent sets of points of large enough cardinality and then use them to compute a solution for the $k$-median problem. In fact, by using this insight and by carefully adding Laplace noise to the cardinality of the sets considered by the algorithm, we obtain our differentially private $k$-median algorithm.  

Our second contribution is a parallel implementation of our algorithm in the classic massively parallel computing (MPC) model. This model is a theoretical abstraction of real-world systems like MapReduce \cite{dean2008mapreduce}, Hadoop \cite{white2012hadoop}, Spark \cite{zaharia2010spark} and Dryad \cite{isard2007dryad} and it is the standard for analyzing algorithms for large-scale parallel computing~\cite{karloff2010model, goodrich2011sorting, beame2013communication}. Interestingly we show that our algorithm can be efficiently implemented using a logarithmic number of MPC parallel rounds for $k$-median clustering. To the best of our knowledge, our algorithms are the first differentially private algorithms for $k$-median that can be efficiently parallelized.

Third, we complement our theoretical results with an in-depth experimental analysis of the performance of our $k$-median algorithm. We demonstrate that not only our algorithms scale to large datasets where, until now, differential private clustering remained elusive, but also we outperform other state-of-the-art private clustering baselines in medium-sized datasets. 
In particular, we show that in practice compared to prior work with theoretical guarantees such as~\cite{balcan}, our parallel algorithm can scale to more $40$ times larger datasets, improve the cost by up to a factor of $2$, and obtain solutions within small single digit constant factor of non-private baselines.

Finally, we adapt those techniques to the $k$-means problem. This poses an additional challenge because randomly-shifted quadtrees do not preserve squared distances accurately and so we cannot apply our approach directly. We adapt a technique first introduced by~\cite{CohenAddadFS19} to our setting. The key observation behind our approach is that even if randomly-shifted quadtrees do not preserve all the squared distances well, they accurately preserve most of them. 

Suppose that we are given in input some solution $S$ for our problem (later will clarify how to obtain it), then we show that for most centers in $S$ it is true that the distances to these centers are approximately preserved by the quadtree. So points living in the clusters of these centers can be clustered using  the tree embedding, and the remaining points can be clustered using the few centers that are not preserved in the solution $S$. 
Interestingly, we  prove that we can use this approach iteratively starting with a solution to the differentially private $1$-means problem as $S$. In Section~\ref{sec:kmeans} we show that this approach leads to an efficient differentially private algorithm for $k$-means.

\section{Preliminaries}\label{sec:prelim}
\paragraph{Notations.}
For two points $p$ and $q$ in $\mathbb{R}^d$, we let $\dist(p, q):=\|p-q\| = \sqrt{\sum_{i=1}^d (p_i-q_i)^2}$ be the Euclidean distance between $p$ and $q$. Given $r \ge 0$, we define $B(x,r) = \{y \in \Re^d | \; \dist(x, y) \le r\}$ as the closed ball around $x$ of radius $r$. 

We are given a set of points $P$ as input, and assume that $P$ is contained in the open ball $B(0, \Lambda)$.

We seek to find $k$ centers $C = \{c_1, \ldots c_k\}$, that approximately minimize the $(k,z)$-clustering where distances of every point to their closest center are raised to the power of $z\geq 1$
$$
\cost(P, C) = \sum_{p \in P} \min_{c \in C} \dist(p, c)^z \mathrm{.}
$$
We use $\opt$ to indicate an optimal solution to the problem. We define $\dist(p, C)  = \min_{c \in C} \dist(p, c)$. In this article, we focus on $k$-median ($z = 1$) and $k$-means ($z = 2$).

We say that a solution $C$ is $(\alpha,\beta)$-approximate if the $\cost(P,C) \le \alpha \cost(P, \opt) + \beta$. We will seek solutions where $\alpha$ is $\tilde O (\poly \log(k))$ and $\beta$ is $\tilde O (\poly(k, d, \log(n))\Lambda^z)$.

The goal of this paper is to have private algorithms that are easy to parallelize and that run in {\it near-linear} running time, where by near linear we mean time $\tilde O(n \cdot \poly(\log(n),k,d,\log(\Lambda)))$. Notice that celebrated k-means++~\cite{arthur2007k} satisfies all the requirements
with its $O(nkd)$ running time and $O(\log(k))$ approximation, except that it is not private. 

\paragraph{Differential privacy.}
\
We will make use of standard composition properties of differentially private algorithms, described in \cite{dwork}. The algorithm $\mathcal{A}$ that applies successively two algorithm $\mathcal{A}_1$ and $\mathcal{A}_2$ that are respectively $\eps_1$-DP and $\eps_2$-DP is itself $(\eps_1 + \eps_2)$-DP. If the $\mathcal{A}_1$ and $\mathcal{A}_2$ run on two distinct parts of the dataset, then $\mathcal{A}$ is $\max(\eps_1, \eps_2)$-DP. 

Lastly, if $\mathcal{A} : D_1 \times D_2 \rightarrow Z$ satisfies that for all $X \in D_1$, the algorithm $ \mathcal{A}(X, \cdot)$ is $\eps_1$-DP, and some algorithm
$\mathcal{B} : D_2 \rightarrow D_1$ is $\eps_2$-DP, then the algorithm $X \rightarrow \mathcal{A}(\mathcal{B}(X), X)$ is $(\eps_1 + \eps_2)$-DP. 

A standard differentially private algorithm is the Laplace Mechanism (see \cite{dwork}). We say that a random variable follows distribution $\laplacenoise(b)$ if its probability density function is $\frac{1}{2b}\exp\left(-\frac{|x|}{b}\right)$. With a slight abuse of notation, we use $\laplacenoise(b)$ to denote a variable that follows such a distribution. Example 3.1 in \cite{dwork} shows that the following algorithm for counting queries is $\eps$-DP :
$\mathcal{A}(X) = |X| + \laplacenoise(1/\eps)$.

Note that the notion of differential privacy is only a model, and our result should not be used blindly to preserve privacy of users.
We emphasize in particular that the privacy notion is with respect to a single user's data: hence, this model does not necessarily ensure privacy for a \emph{group} of people.

\paragraph{Randomly-shifted Quadtrees.}
A quadtree is a binary tree $T$, such that each node $x$ in the tree corresponds to a region $T(x)$ of $\mathbb{R}^d$. To distinguish from the input, we call tree nodes \emph{cells}. Each cell is a hyper-rectangle. For a cell $c$ with children $c_1, c_2$, the region spanned by $c$ is the union of those spanned by $c_1$ and $c_2$, i.e., $T(c) = T(c_1) \cup T(c_2)$. 

A shifted quadtree is constructed as follows. Start from a root cell containing the entire $d$-dimensional hypercube $[-\Lambda, \Lambda]^d$ at depth $0$, and proceed recursively. 
Let $c$ be a cell at depth $d\cdot i + j$, with $0 \leq j < d$. The $j$-th coordinate of the region spanned by $c$ is comprised in $[m, M]$. The children of $c$ are constructed as follows: let $x$ be some random number in $[m + \frac{M-m}{3}, M - \frac{M-m}{3}]$. $c_1$ comprises all points of $c$ that have their $j$-th coordinate at most $x$, and $c_2$ the remaining points\footnote{Another standard way of defining quadtree is to have $2^d$-regular trees, and to split along the $d$-dimensions at each step. We are more comfortable working with binary trees, which allows for a simpler dynamic program.}. Note that the diameter of the cells is divided by at least $3/2$ every $d$ levels. Denote by $\diam(c)$ the diameter of cell $c$.

\begin{algorithm}\caption{DP-kMedian($P$)}
\label{alg:kmed}
\begin{algorithmic}[1]
\State Compute a shifted quadtree $T$. Let $r$ be the root of $T$.
\State $w = $ MakePrivate($T, P$). 
\State Compute $v, S = $DynamicProgram-kMedian($T, w, r$)
\State \textbf{Return} $S_k$
\end{algorithmic}
\end{algorithm}

\begin{algorithm}
\caption{MakePrivate($T, P$)}\label{alg:makePrivate}
\begin{algorithmic}[1]
\State Input: a quadtree $T$, a set of points $P$
\State let $Q$ be a queue, initially containing only the root of $T$.
\State Let $w : T \rightarrow \mathbb{N}$, initiated with $\forall c,~w(c) = 0$.
\While{$Q$ is not empty}
\State Let $c = Q.$pop()
\If{$\diam(c) > \Lambda/n$}
\State let $w(c) = |T(c) \cap P| + \laplacenoise(d \log n / \eps)$
\If{$w(c) > 2 d \log n / \eps$}
\State Add $c$'s children to $Q$
\EndIf
\EndIf
\EndWhile

\State \textbf{Return} $w$
\end{algorithmic}
\end{algorithm}

A quadtree $\calD$ induces a metric: for two points $p$, $q$, we define $\dist_\calD(p, q) = \diam(c)$ where $c$ is the smallest cell that contains both $p$ and $q$. We will frequently use that the expected distortion between two points $p$ and $q$ in a shifted quadtree of depth $d\cdot \alpha$, for any $\alpha$, is $\E_\calD[\dist_\calD(p, q)] \leq d^{3/2} \alpha \dist(p, q).$

\begin{lemma}\label{lem:ballCut}[Reformulation of Lemma 11.3 \cite{har2011geometric}]
  For any $i$, radius $r$ and point $p$, we have that $\Pr[B(p, r) \text{ is cut at level } i] = O\left(\frac{d r}{2^i}\right)$.
\end{lemma}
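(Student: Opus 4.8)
\emph{Proof sketch.} The statement is the natural repackaging, for the binary shifted quadtree used here, of the standard randomly‑shifted‑grid estimate, so the plan is to mimic that argument. The event ``$B(p,r)$ is cut at level $i$'' should be read as: one of the axis‑parallel hyperplanes introduced while subdividing the level‑$i$ cell containing $p$ (there are $d$ of them, one per coordinate, at consecutive depths $di, di+1, \dots, di+d-1$) passes through the interior of $B(p,r)$. I would condition throughout on the chain of ancestor cells of $p$ down to level $i$; this fixes the side lengths $L_1,\dots,L_d$ of the level‑$i$ cell $C\ni p$. By the geometric decay built into the construction (each coordinate side shrinks by a constant factor every $d$ levels), with the convention that deeper cells carry smaller indices one has $L_j=\Theta(2^i)$ for every $j$; if one worries about a degenerate coordinate, it could only have been created by a strictly coarser cut, whose effect can be charged to that coarser level, so I will just use $L_j=\Theta(2^i)$.

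The core is a one‑dimensional computation. Fix a coordinate $j$. Conditioned on the ancestor chain, the $j$‑th subdividing hyperplane is $\{y : y_j = x\}$ with $x$ uniform over the middle third of $C$'s $j$‑th side, an interval of length $L_j/3$. The orthogonal projection of $B(p,r)$ onto the $j$‑th axis is exactly $[p_j-r,\,p_j+r]$, so this hyperplane meets the interior of $B(p,r)$ iff $x\in(p_j-r,\,p_j+r)$, which happens with probability at most $\dfrac{2r}{L_j/3}=\dfrac{6r}{L_j}=O\!\left(\dfrac{r}{2^i}\right)$ (the bound being vacuous, hence trivially valid, when it exceeds $1$). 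A union bound over the $d$ coordinates gives $\Pr[B(p,r)\text{ cut at level }i \mid \text{ancestor chain}]\le\sum_{j=1}^d \dfrac{6r}{L_j}=O\!\left(\dfrac{dr}{2^i}\right)$, and since this holds for every realization of the conditioning it holds unconditionally after averaging.

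The one delicate point — and I expect it to be the only real obstacle, a minor one — is that, unlike a single globally randomly‑shifted grid, here the level‑$i$ ``grid'' sits inside a cell whose location and side lengths are themselves random, so the per‑coordinate cuts are not a priori independent uniform cuts of fixed windows. Conditioning on the ancestor chain of $p$ removes this dependence cleanly: given the chain, the $d$ coordinate cuts are mutually independent and each is uniform over an interval of known length $L_j/3$, so the estimate above applies verbatim; and because $L_j=\Theta(2^i)$ holds for every realization of the chain, nothing is lost when we take expectations over it. This reduces the lemma exactly to the elementary one‑dimensional fact used in the second paragraph, which is the content of Lemma~11.3 of \cite{har2011geometric} transported to the present construction.
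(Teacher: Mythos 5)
The paper does not prove this lemma; it states it as a direct reformulation of Lemma~11.3 in Har-Peled's book and moves on, so there is no in-paper argument to compare against. Your proof correctly reconstructs the standard argument behind the cited result, adapted to the binary shifted quadtree used here: conditioning on the ancestor chain fixes the level-$i$ cell $C$ containing $p$, whose coordinate side lengths are each $\Theta(2^i)$; given $C$, the $d$ subdividing cuts are drawn uniformly from the middle thirds of $C$'s respective coordinate extents (the $j$-th extent is unchanged by the earlier cuts in the same block, since those act on other coordinates), so each cut meets the projection $[p_j-r,p_j+r]$ with probability $O(r/2^i)$, and a union bound over the $d$ cuts gives $O(dr/2^i)$, which then holds unconditionally after averaging over $C$. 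You also correctly observe that the event ``$B(p,r)$ is cut at level $i$'' (smallest enclosing cell lies at that level) is contained in the event that some cut in that depth block meets $B(p,r)$, so bounding the latter suffices. One purely cosmetic remark: the paper's convention places ``level $i$'' at tree depths $(d(i-1),di]$, whereas you use depths $di,\dots,di+d-1$; this off-by-one shift in indexing does not affect the side-length scale or the bound.
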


In our case, we stop the construction when reaching cells of diameter $\Lambda / n$. Hence, $\alpha = \log n$
and the expected distortion is $O(d^{3/2} \log n)$. 
Such trees are often called Hierarchically Separated Trees (HST) in the literature.

\paragraph{Dimension Reduction.} 
For clustering problems, it is possible to use the Johnson-Lindenstrauss Lemma to reduce the dimension to $O(\log k)$ (see \cite{makarychev2019performance}). Hence, we can apply all our algorithms in such a dimension, replacing dependency in $d$ by $\log k$. 
To compute centers in the original space, we can extract the clusters from the low-dimensional solution and compute privately the $1$-median (or $1$-mean) of the cluster in the original $d$-dimensional space. This adds an additive error $O(kd)$. Later we describe how to implement this procedure in MPC.

\section{Simple algorithm for $k$-Median}

\paragraph{Algorithm}

A simple way of solving $k$-median is to embed the input points into a tree metric. Tree metrics are sufficiently simple to admit a dynamic program for computing an optimum. The approximation factor of this algorithm is therefore the distortion incurred by the embedding. 
We adapt this approach to incorporate privacy as follows. First, we embed the input into a quadtree, which is a hierarchical decomposition of $\mathbb{R}^d$ via axis-aligned hyperplanes. We then add noise on the quadtree to enforce privacy.
Subsequently, we run the dynamic program on the quadtree.
Unfortunately, a naive implementation of the dynamic program falls short of the nearly linear time algorithm we are hoping for.
We speed this up by trimming the recursion of the dynamic program for quadtree cells containing few points. 
To do this, we require a private count of the number of points in each cell, that guides the dynamic program.
Such a private count can be obtained from the Laplace mechanism (see the preliminaries), and the error incurred by the privacy is charged to the additive term of the approximation. The result we aim to show is the following theorem:
\begin{theorem}\label{thm:kmed}
Algorithm~\ref{alg:kmed} is $\eps$-DP and computes a solution with expected cost at most 
$O(d^{3/2}\log n)\cdot \opt + \frac{d^2 \cdot \log^2 n \cdot k}{\eps}\cdot \Lambda$. Furthermore, it runs in time $\tilde O(n d k^2)$.
\end{theorem}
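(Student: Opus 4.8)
The quadtree $T$ depends only on the random shifts, so I condition on it throughout. The plan is to exhibit the output of \texttt{MakePrivate} as a post-processing of the noisy-count vector $\big(|T(c)\cap P|+\laplacenoise(d\log n/\eps)\big)_{c}$ ranging over all cells $c$ of diameter $>\Lambda/n$: whether a cell is explored is determined recursively by these noisy values (a cell is explored iff its parent is explored and has noisy count $>2d\log n/\eps$), and $w$ simply reveals the noisy count of explored cells and $0$ elsewhere. For neighboring $P,P'$ differing in a single point $p$, the true counts $|T(c)\cap P|$ and $|T(c)\cap P'|$ agree except on the $O(d\log n)$ cells that are ancestors of $p$'s leaf — one per level, and the tree has depth $O(d\log n)$ since diameters shrink by a factor $3/2$ every $d$ levels and we stop at diameter $\Lambda/n$. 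Each of these $O(d\log n)$ coordinates has sensitivity $1$ and independent $\laplacenoise(d\log n/\eps)$ noise, so by basic composition (fixing constants so the scale is at least $\mathrm{depth}/\eps$) the noisy-count vector, hence $w$, is $\eps$-DP; \texttt{DynamicProgram-kMedian} and returning $S_k$ are further post-processing of $(T,w)$.

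\textbf{Running time.} I would bound $N$, the expected number of explored cells. Cells containing $\ge 1$ point of $P$: each point is in one cell per level over $O(d\log n)$ levels, so there are $O(nd\log n)$ of them and at most as many empty cells with a non-empty parent. Any other explored cell is empty with an explored parent $c'$ of diameter $>\Lambda/n$ and noisy count $>2d\log n/\eps$; but an empty $c'$ has noisy count $\laplacenoise(d\log n/\eps)$, exceeding $2d\log n/\eps$ with probability $\tfrac12 e^{-2}<\tfrac12$, and only then are its two (empty) children explored. So the empty explored cells with empty parents form a branching process of offspring mean $e^{-2}<1$ seeded by $O(nd\log n)$ cells, and its expected total size is $O(nd\log n)$; thus $\E[N]=\tilde O(nd)$. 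Computing all true counts takes $\tilde O(nd)$ (route each point down the explored subtree), and \texttt{DynamicProgram-kMedian} is the standard tree dynamic program for $k$-median, combining each explored cell's two children's length-$(k{+}1)$ tables by a $\min$-plus convolution in $O(k^2)$ time; so the total is $O(Nk^2)=\tilde O(ndk^2)$ in expectation.

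\textbf{Approximation.} Let $C^\star=\opt$. Because both endpoints of a pair lie in their least-common-ancestor cell, whose diameter bounds their Euclidean distance, the quadtree metric dominates the Euclidean one, so it suffices to bound $\cost_\calD(P,S_k)$; and by the distortion bound $\E_\calD[\cost_\calD(P,C^\star)]\le d^{3/2}\log n\cdot\opt$. I would then chain three estimates, all conditioned on the high-probability event $\mathcal G$ that every Laplace draw has magnitude $O(d\log^2 n/\eps)$ (union bound over the $\poly(n)$ cells). \emph{(i)} Round $C^\star$ to a DP-feasible $\hat C^\star$ by moving each center to the first cell $\hat c_i$ on its root-to-leaf path that \texttt{MakePrivate} did not refine; only points inside $\hat c_i$ have their cost changed, so the increase is $\sum_i |T(\hat c_i)\cap P|\cdot\diam(\hat c_i)$, and on $\mathcal G$ such a cell has diameter $\le\Lambda/n$ (total $\le\Lambda$) or contains $O(d\log^2 n/\eps)$ points, bounding this by $O(k\,d\log^2 n/\eps)\cdot O(\sqrt d\,\Lambda)+\Lambda$. \emph{(ii)} The DP evaluates a solution $S$ as $\sum_{c:\,T(c)\cap S\neq\emptyset}\diam(c)\sum_{c'\,\text{center-free child of }c} w(c')$, which on $\mathcal G$ differs from $\cost_\calD(P,S)$ — up to the truncation term of (i) — by at most $O(d\log^2 n/\eps)\sum_{c:\,T(c)\cap S\neq\emptyset}\diam(c)$; since $T(c)\cap S\neq\emptyset$ only along the $\le k$ root-to-leaf paths through $S$'s centers, and $\sum_{c\,\text{on a path}}\diam(c)=O(d^{3/2}\Lambda)$ by geometric decay, this is $O\!\big(k\,d^{3/2}\Lambda\cdot d\log^2 n/\eps\big)$, uniformly over all $\le k$-center $S$. \emph{(iii)} $S_k$ minimizes the DP objective among $\le k$-center solutions, so its objective value is at most that of $\hat C^\star$. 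Combining, $\cost_\calD(P,S_k)$ is at most its own objective value plus the (ii)-error, hence at most $\hat C^\star$'s objective value plus the (ii)-error, hence at most $\cost_\calD(P,\hat C^\star)$ plus twice the (ii)-error, hence at most $\cost_\calD(P,C^\star)$ plus twice the (ii)-error plus the (i)-increase; off $\mathcal G$ every cost is $\le 2n\sqrt d\,\Lambda$ while $\Pr[\neg\mathcal G]$ is made $\le 1/\poly(n)$, contributing $O(\Lambda)$. Taking expectations over $\calD$ and the noise gives $O(d^{3/2}\log n)\,\opt+\frac{d^2\log^2 n\cdot k}{\eps}\Lambda$ (absorbing $\sqrt d$ factors into the $d^2$).

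\textbf{Main obstacle.} I expect step (ii) to be the crux: one must pin down the precise objective optimized by \texttt{DynamicProgram-kMedian}, verify that the rounded $\hat C^\star$ is genuinely feasible for it, and — since $S_k$ depends on the Laplace noise — control the count-substitution error simultaneously for \emph{all} admissible $\le k$-center solutions via one high-probability bound on the noise, so that it applies to the noise-dependent output $S_k$. Obtaining the stated dimension dependence also relies on summing cell diameters geometrically along a path rather than bounding each by the root diameter.
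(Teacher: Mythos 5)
Your privacy and running-time arguments are essentially the paper's, just repackaged: the paper proves $\eps$-DP by induction on tree depth using parallel composition over the two disjoint subtrees, which is the same as your observation that a neighboring dataset changes the true count of exactly one cell per level, $O(d\log n)$ cells total, each noised at scale $d\log n/\eps$; and the paper bounds the number of visited cells via the identical Galton--Watson argument. For the approximation, however, you take a genuinely different route. The paper (Lemma~\ref{lem:dynKMed}) does a direct induction on $F(c,k')$, the \emph{expected} DP value, showing $F(c,|S\cap T(c)|)\le\cost_T(T(c),S)+\frac{d\log n}{\eps}\Lambda|S\cap T(c)|\,h$ for a fixed reference solution $S$ (so the Laplace terms vanish by linearity of expectation), and then a second induction showing the extracted solution's cost tracks the DP's value. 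You instead condition on a high-probability event $\mathcal G$ bounding every Laplace draw, round $\opt$ into the truncated tree, and prove a uniform deviation bound between the DP objective and $\cost_\calD$ over \emph{all} admissible solutions, so that it applies to the noise-dependent $S_k$. Your route makes the handling of the noise-dependence of $S_k$ explicit, which the paper's proof treats somewhat briefly.

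The cost of that route is that your additive bound does not match the statement, and the ``absorbing $\sqrt{d}$ factors into the $d^2$'' step is not valid. Concretely, in (ii) you pay $O(d\log^2 n/\eps)$ per relevant cell (the union-bounded Laplace magnitude, correct), times $\sum_{c\text{ on a path}}\diam(c)=O(d^{3/2}\Lambda)$ (correct: $d$ levels per halving, diameter $O(\sqrt d\,\Lambda)$ at the top, geometric decay), times $k$ paths, giving $O(k\,d^{5/2}\log^2 n\,\Lambda/\eps)$. Since $d^{5/2}=d^{2}\cdot\sqrt d$ this is \emph{larger} than the claimed $O(k\,d^{2}\log^{2}n\,\Lambda/\eps)$, so the extra $\sqrt d$ cannot be ``absorbed.'' The paper avoids the extra $\sqrt d$ by working in expectation rather than through a worst-case per-cell noise bound times a worst-case per-cell diameter: in the induction the Laplace terms contribute zero in expectation, and $\Lambda$ (rather than $\sqrt d\,\Lambda$) appears directly in the inductive invariant. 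To repair your version you would either need to follow the paper's expectation-based induction, or find a bound on $\sum_{c}|\laplacenoise(c)|\diam(c)$ that does not pay the worst-case per-cell product --- the current step (ii) is where the proposal falls short of the stated bound.
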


\begin{algorithm}
\caption{DynamicProgram-kMedian($T, w, c$)}\label{alg:dynKMed}
\begin{algorithmic}[1]
\State Input: A quadtree $T$, a weight function on $T$'s node $w$, and a cell $c$.
\State Ouput: For each $k' \leq k$, a value $v_{k'}$ and a solution $S_{k'}$ for $k'$-median on $T(c)$
\State set $v_0 \gets w(c)\cdot \diam(c)$ and $S_0 \gets \emptyset$. 
\If{$w(c) < \frac{2 d \log n}{\eps}$} 
\State For all $k'$, set $v_{k'} \gets 0$ and $S_{k'} \gets k'$ copies of the center of $c$.
\Else 
\State Let $c_1, c_2$ be the two children of cell $c$
\State Let $v^i, S^i$ be the output of  DynamicProgram-kMedian($T, w, c_i$).
\State for all $k'$, let $(k_1, k_2) = \text{argmin}_{k_1 + k_2 = k'} v^1_{k_1} + v^2_{k_2}$, and do $v_{k'} \gets v^1_{k_1} + v^2_{k_2}$, $S_{k'} \gets S^1_{k_1} \cup S^2_{k_2}$.
\EndIf

\State \textbf{Return} $v, S$.
\end{algorithmic}
\end{algorithm}

\paragraph{Analysis}
\begin{lemma}\label{lem:dynKMed}
Step 3 of \cref{alg:kmed} computes a solution with expected  cost $\opt_T + k\cdot d^2 \log^2 n/ \eps \cdot \Lambda$, where $\opt_T$ is the optimal solution on the metric induced by the tree $T$, and the expectation is taken over the realization of the variables $\laplacenoise$. 
\end{lemma}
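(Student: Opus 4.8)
The plan is to compare the value $v_k$ that \cref{alg:dynKMed} returns on the noisy weights $w$ with two reference quantities — the value it would return on the true counts, and the untrimmed tree optimum $\opt_T$ — so that the two discrepancies become exactly the errors we pay for: the trimming of the recursion, and the injected Laplace noise. Throughout, write $\eta_{c'}\sim\laplacenoise(d\log n/\eps)$ for the noise added at cell $c'$ in \textsc{MakePrivate}.

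\textbf{Correctness of the DP and a surrogate cost.} First I would verify, by induction on the height of a cell $c$, that \cref{alg:dynKMed} computes for every budget $k'\le k$ the value $v_{k'}=\min_C\mathrm{scost}_w(C)$, the minimum over multisets $C$ of at most $k'$ centres placed at centres of cells of the \emph{computed} subtree below $c$ (the cells actually reached by \textsc{MakePrivate}), where $\mathrm{scost}_\omega(C):=\sum_{c'}\omega(c')\,\diam(c')$ is summed over the \emph{maximal uncovered} cells $c'$ — those with no centre of $C$ in $T(c')$, a centre in $T(\mathrm{parent}(c'))$, and $c'$ not a computed leaf (leaves contribute $0$, since the recursion is cut there on line~5). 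The base case is a computed leaf and the inductive step is the $\mathrm{argmin}$ split of line~9. The combinatorial fact I would use repeatedly is that any $C$ with $|C|\le k$ has at most $k\cdot(\mathrm{depth})=O(kd\log n)$ maximal uncovered cells, since each such cell is a sibling of an ancestor of some centre, and every centre has $O(d\log n)$ ancestors.

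\textbf{Two comparisons.} For a fixed $C$ supported on the computed tree I would establish: (i) $|\mathrm{scost}_w(C)-\mathrm{scost}_P(C)|\le\sum_{c'}|\eta_{c'}|\,\diam(c')$ over the maximal uncovered cells, where $\mathrm{scost}_P$ is the same sum with true counts $|T(c')\cap P|$ in place of $w(c')$; and (ii) the genuine tree cost $\cost_T(C)$ equals $\mathrm{scost}_P(C)$ up to an $O(1)$ factor plus the mass in uncovered computed leaves. The $O(1)$ factor is because charging $\diam(c')$ instead of $\diam(\mathrm{parent}(c'))$ loses only a constant (consecutive quadtree cells have comparable diameters), and by the combinatorial fact there are $O(kd\log n)$ uncovered computed leaves, each of diameter $\le\Lambda$ and holding at most $2d\log n/\eps+|\eta_{c'}|$ points (a leaf is cut either because $\diam\le\Lambda/n$, contributing $O(\Lambda)$ in total, or because $w<2d\log n/\eps$), so that mass is $O(kd^2\log^2n/\eps)\cdot\Lambda$ plus a sum of the relevant $|\eta_{c'}|$.

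\textbf{Chaining and the main obstacle.} Let $C^\star$ be optimal for $\cost_T$; pushing each of its centres lying below a computed leaf up to that leaf's centre increases $\cost_T$ by only $O(kd^2\log^2n/\eps)\cdot\Lambda$ plus such a noise sum (same count-of-cut-cells argument), giving $\widetilde C^\star$ on the computed tree with $\cost_T(\widetilde C^\star)\le\opt_T+(\text{that error})$. Since the DP minimises $\mathrm{scost}_w$, $\mathrm{scost}_w(S_k)=v_k\le\mathrm{scost}_w(\widetilde C^\star)$; unfolding the left side with (i)+(ii) from below and the right with (i)+(ii) from above yields $\cost_T(S_k)\le O(1)\cdot\opt_T+O(kd^2\log^2n/\eps)\cdot\Lambda+\Lambda\big(\sum_{c'\in M(S_k)}|\eta_{c'}|+\sum_{c'\in M(\widetilde C^\star)}|\eta_{c'}|\big)$, where $M(\cdot)$ is the set of maximal uncovered cells. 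The delicate point — the one I expect to be the real obstacle — is taking expectations of the last two sums: each ranges over only $O(kd\log n)$ cells, but \emph{which} cells depends on the noise itself (through $S_k$ and through which cells become computed leaves), so one cannot apply $\E|\eta_{c'}|=d\log n/\eps$ term by term. I would handle this by dominating each sum by the sum of the $O(kd\log n)$ largest $|\eta_{c'}|$ among all cells ever examined by \textsc{MakePrivate} — of which there are only $\tilde O(n)$ (in expectation), since every internal computed cell has true count $\Omega(d\log n/\eps)$, the same bound that underlies the running-time claim — and invoking the standard order-statistic tail estimate for i.i.d.\ Laplace variables; this gives $\E[\text{each sum}]=O(kd\log n)\cdot(d\log n/\eps)$ up to lower-order logarithmic factors, hence $\E[\cost_T(S_k)]\le O(1)\cdot\opt_T+O(kd^2\log^2n/\eps)\cdot\Lambda$. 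The residual constant in front of $\opt_T$ is harmless: it is absorbed into the $O(d^{3/2}\log n)$ factor of \cref{thm:kmed} when $\opt_T$ is in turn compared with $\opt$.
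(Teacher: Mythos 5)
Your proposal takes a genuinely different route from the paper's proof. The paper runs a single two-sided induction directly on $F(c,k')$, the expected DP value at each cell: one direction compares $\E[v_{k'}(c)]$ against $\cost_T(T(c),S^*)$ by fixing the split induced by an optimal solution $S^*$ (so the uncovered sibling cells are a \emph{deterministic} family, and the Laplace term enters as $\E[\laplacenoise\cdot\diam(c_1)]$, which is easily controlled term-by-term); the other direction compares the cost of the extracted solution $S_{k'}$ against $F(c,k')$, with the error charged per DP-leaf, again inside the induction so that each leaf contributes $O(d\log n/\eps)\cdot\Lambda$ in expectation. You instead define an explicit surrogate cost $\mathrm{scost}_w$, decompose the error into a noise term and a trimming term, and chain through a lifted optimum $\widetilde C^\star$. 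The obstacle you correctly flag — that the set $M(S_k)$ of maximal uncovered cells is \emph{noise-dependent}, so one cannot sum $\E|\eta_{c'}|$ over it — is precisely what the paper's inductive structure sidesteps by never summing $|\eta|$ over an adaptively chosen collection.

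Two consequences of your route fall short of the lemma as stated. First, replacing $\diam(\mathrm{parent}(c'))$ by $\diam(c')$ costs you an $O(1)$ multiplicative factor in front of $\opt_T$, whereas the lemma (and the paper's induction) produce a coefficient of exactly $1$; you note this is absorbed in \cref{thm:kmed}, which is true, but it is still a weakening of the lemma itself. Second, and more substantively, the order-statistics bound for the top $O(kd\log n)$ of $\tilde O(n)$ i.i.d.\ Laplace magnitudes gives expectation $\Theta\bigl(kd\log n\cdot\tfrac{d\log n}{\eps}\cdot\log\tfrac{n}{kd\log n}\bigr)$, i.e.\ an extra $\Theta(\log n)$ factor in the additive term; calling this ``lower-order'' is misleading, since it turns $kd^2\log^2 n/\eps\cdot\Lambda$ into $kd^2\log^3 n/\eps\cdot\Lambda$. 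If you want to recover the claimed bound via your decomposition, you would need to argue that the noise sum over $M(S_k)$ is bounded without a union over all examined cells — for instance by coupling $M(S_k)$ to the deterministic family $M(\widetilde C^\star)$ level by level, which is essentially what the paper's induction is implicitly doing.

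Finally, a smaller remark: your lifting step (``pushing centres of $C^\star$ up to the computed leaf'') only costs $O(k\cdot d\log n/\eps)\cdot\Lambda$ plus noise, not $O(kd^2\log^2 n/\eps)\cdot\Lambda$ — each affected leaf holds $\le 2d\log n/\eps+|\eta|$ points and has diameter $\le\Lambda$, and there are at most $k$ such leaves. This is in your favour, but worth getting right since it clarifies that the dominant error in the paper's stated bound comes from the depth factor $h=O(d\log n)$ in the first induction, not from the trimming.
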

\begin{proof}
In an HST metric, we have the following property. For a cell $c$, three points $x, y \in T(c)$ and $z \notin T(c)$, it holds that $\dist(x, y) \leq \diam(T(c)) \leq \dist(x, z)$. Hence, if there is a center in $T(c)$, then all clients of $T(c)$ can be served by some center in $T(c)$. Moreover, if there is no center in $T(c)$, points in $T(c)$ are at distance of at least $\diam(T(c))$ from a center.

Let $F(c, k')$ be the expected cost of Dynamic\-Program-\-k\-Median($T, k', c$), 
where the probability is taken over the privacy randomness.
and $S$ be any solution. We show by induction that
for any cell $c$ of height $h$ in the tree with $T(c) \cap S \neq \emptyset$, 
{\setlength{\emergencystretch}{2.5em}\par}
$$F(c, |S \cap T(c)|) \leq \cost_T(T(c), S) + \frac{d \log n}{\eps} \cdot \Lambda \cdot |S \cap T(c)| \cdot h.$$
This is true by design of DynamicProgram-kMedian for any leaf that contains a center of $S$.

For an internal node $c$ with two children $c_1, c_2$ such that $T(c_1) \cap S \neq \emptyset$ and $T(c_2) \cap S \neq \emptyset$:
it holds that $F(c, S \cap T(c)) \leq F(c_1, |S \cap T(c_1)|) + F(c_2, |S \cap T(c_2)|)$. Hence, by induction:
\begin{align*}
    &F(c, |S \cap T(c)|) \leq F(c_1, |S \cap T(c_1)|) + F(c_2, |S \cap T(c_2)|)\\
    &\leq \cost_T(T(c_1), S) + \frac{d \log n}{\eps} \cdot \Lambda \cdot |S \cap T(c_1)| \cdot (h-1)\\ 
    &\qquad + \cost_T(T(c_2), S) + \frac{d \log n}{\eps} \cdot \Lambda \cdot |S \cap T(c_2)| \cdot (h-1)\\
    &\leq \cost_T(T(c), S) + \frac{d \log n}{\eps} \cdot \Lambda \cdot |S \cap T(c)| \cdot h,
\end{align*}
where the last lines use $\cost(T(c), S) = \cost(T(c_1), S) + \cost(T(c_2), S)$.
{\setlength{\emergencystretch}{2.5em}\par}

The last case is when $S \cap T(c_1) = \emptyset, S \cap T(c_2) \neq \emptyset$. Let $h$ be the height of $c$. We have $S \cap T(c_2) = S \cap T(c)$, and so: 
\begin{align*}
    F(&c, |S \cap T(c)|) \leq F(c_1, 0) + F(c_2,  |T(c_2) \cap S|)\\
    &\leq T(c_1) \cdot \diam(c_1) + \E[\laplacenoise(\eps / (d \log n)) \cdot \diam(c_1)] \\
    &\qquad + \cost_T(T(c_2), S) + \frac{d \log n}{\eps} \cdot \Lambda \cdot |S \cap T(c_2)| \cdot (h-1)\\
    &\leq \cost_T(T(c), S) + \frac{d \log n}{\eps} \cdot |S \cap T(c)| \cdot \Lambda \cdot h.
\end{align*}

This shows that the value computed by the algorithm is at most $\cost_T(\opt) + \frac{d^2 \log^2 n}{\eps} \cdot \Lambda \cdot k$. Now, we need to show the converse: the value computed corresponds to an actual solution. 

This is done inductively as well. For any $k'$ and cell $c$ one can compute a solution $S$ for $T(c)$ with $k'$ centers and expected cost at most $F(c, k') + \frac{d \log n}{\eps} \cdot k'$. For that, the base cases are when $k' = 0$, and then $\emptyset$ works, or when $w(c) \leq \frac{2d\log n}{\eps}$, where the center of the cell works. Otherwise,
it is enough to find $k_1, k_2$ such that $k_1+k_2 = k'$ and $F(c, k') = F(c_1, k_1) + F(c_2, k_2)$. Let $S_i$ be the solution computed for $T(c_i)$ with $k_i$ centers: the solution for $T(c)$ is simply $S_1 \cup S_2$. By induction, its cost is at most 
\begin{align*}
    F(c_1, k_1) + F(c_2, k_2) + \frac{d \log n}{\eps} \cdot \Lambda \cdot (k_1 + k_2) \\
    = F(c, k') +  \frac{d \log n}{\eps}\cdot \Lambda  \cdot k'. \qquad \qedhere
\end{align*}
\end{proof}

\begin{restatable}{lemma}{makeprivateDP}
\label{lem:makePrivateDP}
Algorithm~\ref{alg:makePrivate} is $\eps$-DP.
\end{restatable}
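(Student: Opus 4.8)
The plan is to reduce the statement to repeated applications of the Laplace mechanism together with the parallel and sequential composition properties recalled in \cref{sec:prelim}. Since the quadtree $T$ is constructed independently of $P$, it suffices to prove $\eps$-DP for every fixed $T$. The first -- and conceptually most important -- step is the following observation: although the set of cells for which \cref{alg:makePrivate} actually computes a weight is data-dependent (a cell is ``opened'' only if its parent's \emph{noisy} count exceeded the threshold), the noises $\laplacenoise(d\log n/\eps)$ attached to distinct cells are mutually independent. Hence one may imagine sampling, once and for all, an independent noise $Z_c$ for \emph{every} cell $c$ of $T$ with $\diam(c) > \Lambda/n$ (finitely many), and forming $\hat w(c) = |T(c)\cap P| + Z_c$. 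The output $w$ of \cref{alg:makePrivate} is then a fixed deterministic function of the collection $(\hat w(c))_c$: processing levels in order from the root, declare $c$ opened iff $c$ is the root, or $\mathrm{parent}(c)$ is opened with $\diam(\mathrm{parent}(c)) > \Lambda/n$ and $\hat w(\mathrm{parent}(c)) > 2d\log n/\eps$; then set $w(c)=\hat w(c)$ on opened cells of diameter $>\Lambda/n$ and $w(c)=0$ everywhere else. By the post-processing property of differential privacy it therefore suffices to show that releasing the whole table $(\hat w(c))_c$ is $\eps$-DP. Note this is exactly where we use that the algorithm branches on the privatized count $\hat w$ rather than on the true count $|T(c)\cap P|$.

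Next I would organize the table by levels of $T$. The cells at a fixed level $\ell$ partition $[-\Lambda,\Lambda]^d$, so the counts $\{|T(c)\cap P| : c \text{ at level }\ell\}$ are computed on disjoint parts of the dataset; adding an independent $\laplacenoise(d\log n/\eps)$ to each and releasing them is, by the sensitivity-$1$ bound for counting queries and parallel composition, $\tfrac{\eps}{d\log n}$-DP. Only $O(d\log n)$ levels are relevant: a cell of diameter $>\Lambda/n$ must lie at level at most $L:=d\cdot\lceil \log_{3/2}(2n\sqrt d)\rceil = O(d\log n)$, because diameters start at $2\Lambda\sqrt d$ and shrink by a factor $\ge 3/2$ every $d$ levels, and deeper cells are never touched by the algorithm. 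Releasing all $L{+}1$ level-tables is then, by sequential composition, $\big((L{+}1)\cdot\tfrac{\eps}{d\log n}\big)$-DP, i.e.\ $O(\eps)$-DP; taking the noise parameter to be a sufficiently large constant times $d\log n/\eps$ (which only affects the constants in \cref{thm:kmed}) makes this exactly $\eps$-DP. Combined with the post-processing reduction, this yields the lemma. Equivalently, one can argue the bound on $(\hat w(c))_c$ directly: for neighbouring $P,P'$ differing in a single point $p$, the counts $|T(c)\cap P|$ and $|T(c)\cap P'|$ coincide except on the cells $c$ with $p\in T(c)$, which form a single root-to-leaf path of length $O(d\log n)$; each such coordinate, a Laplace mechanism of scale $d\log n/\eps$, contributes at most $\eps/(d\log n)$ to the privacy loss, and every other coordinate contributes $0$.

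The step I expect to require the most care is the first one: making precise that \cref{alg:makePrivate}'s adaptive exploration is faithfully simulated by the non-adaptive ``sample all noises, then post-process'' mechanism -- that is, that the joint law of $(w, \text{set of opened cells})$ is unchanged if the $Z_c$'s are all drawn in advance and only those at opened cells are read. This is intuitively immediate from independence of the $Z_c$'s, but it is the one point where the informal algorithmic description has to be turned into a statement about distributions, and where one must check that the ``openedness'' predicate only ever inspects $\hat w$-values at cells that the lazy algorithm would also have inspected. Once that coupling is in hand, the remainder is the routine composition bookkeeping sketched above.
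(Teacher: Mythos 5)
Your proof is correct. It rests on the same two composition facts the paper uses -- parallel composition across the disjoint cells of a level, and sequential composition across the $O(d\log n)$ levels, together with the sensitivity-$1$ Laplace mechanism -- and arrives at the same per-cell noise scale of $d\log n/\eps$. The organization differs: the paper argues by induction on the subtree rooted at each cell (implicitly folding the adaptive branching into the composition step), whereas you first make an explicit post-processing reduction that lazily samples noise for every cell and then reads off the opened cells, and then decompose by levels. Your version makes the handling of the data-dependent recursion more transparent; in fact this is precisely the step the paper's proof passes over when it asserts that ``MakePrivate($T,P$) boils down to computing $w(r)$, MakePrivate($T(r_1),P$), MakePrivate($T(r_2),P$)'' even though the two recursive calls are only made when the \emph{noisy} root count exceeds the threshold. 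Your observation that the depth may be $c\cdot d\log n$ for some constant $c>1$ (since cell diameters shrink by a factor between $3/2$ and $3$ every $d$ levels) is also accurate; the paper glosses over the same constant-factor slack, so adjusting the noise scale by a constant, as you suggest, is the correct fix for both proofs.
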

\begin{proof}
We show by induction that for a tree $T$ of depth $h$, MakePrivate($T, P$) is $\left(\frac{\eps}{d \log n} \cdot h\right)$-DP.

When the root of the tree has diameter at most $\Lambda/n$, the algorithm returns the zero function, which is $0$-DP.
Let $\calD$ be a tree of depth $h$ rooted at $r$ with $\diam(r) > \Lambda/n)$, and let $r_1, r_2$ be the two children of $r$. Computing $w(r)$ is $\frac{\eps}{d \log n}$-DP, by property of the Laplace Mechanism.

Now, by induction hypothesis, MakePrivate($T(r_1), P$) and
MakePrivate($T(r_2), P$) are $\left(\frac{\eps}{d \log n} \cdot (h-1)\right)$-DP. Since they are computed on two disjoint sets, the union of the two results is  $\left(\frac{\eps}{d \log n} \cdot (h-1)\right)$-DP as well. 
Notice that the algorithm MakePrivate($T, P$) boils down to computing $w(r)$,  MakePrivate($T(r_1), P$) and  MakePrivate($T(r_2), P$).
Hence, by composition MakePrivate($T, P$) is $\left(\frac{\eps}{d \log n} \cdot h\right)$-DP.{\setlength{\emergencystretch}{2.5em}\par}
\end{proof}

Combining Lemmas~\ref{lem:dynKMed}, \ref{lem:makePrivateDP} and properties of quadtrees, we conclude the proof of \cref{thm:kmed}:
\begin{proof}[Proof of \cref{thm:kmed}]
We start by proving the approximation guarantee. For this, note that the key property of quadtrees is that  
$\E_\calD[\dist_\calD(p, q)] \leq O\left(d^{3/2} \log n\right) \dist(p, q)$, where the expectation is taken on the tree randomness.
Hence, the optimal $k$-median solution is only distorted by a $O\left(d^{3/2} \log n\right)$ factor: $\opt_T \leq O\left(d^{3/2} \log n\right) \opt$. 

Combined with Lemma \ref{lem:dynKMed}, this shows the approximation guarantee of the whole algorithm.
Lemma \ref{lem:makePrivateDP} shows the privacy guarantee. 
What therefore remains is to bound the running time. 

Computing the cells of the quadtree containing some points of $P$ can be done in a top-down manner in time $O(nd \log n)$ as follows. 
Let $c$ be a cell at depth $d\cdot i + j$ with  $j < d$, and $c_1, c_2$ be the two children of $c$. Given a $T(c) \cap P$, it is easy to compute $T(c_1) \cap P$ and $T(c_2) \cap P$ in time $O(|T(c) \cap P|)$, by partitioning $T(c) \cap P$ according to the value of their $j$-th coordinate. Since there are $O(d \log n)$ levels in the tree, this is done in time $O(nd \log n)$.

Hence, the running time of \cref{alg:makePrivate} is bounded by $\tilde O(nd)$ plus the time to process empty cells added to $Q$.
There are at most $nd \log n$ empty cells with a non-empty parent added -- one per level of the tree and per point of $P$. Each of them gives rise to a Galton-Watson process: each node adds its two children with probability $\Pr[\laplacenoise(d \log n / \eps) > 2d \log n / \eps] = e^{-2} < 1/2$. By standard properties of a Galton-Watson process, this goes on for a constant number of steps. 
Therefore, there are at most $\tilde O(n d)$ empty cells added to $Q$, which concludes the running time bound for \cref{alg:makePrivate}. 

Let $N$ be the number of cells that have a non-zero value of $w$. We claim that $N = \tilde O(nd)$ and that the running time of \cref{alg:dynKMed} is $O(Nk^2)$. 
For the first claim, note that $N$ is equal to the number of cells added to $Q$, which is $\tilde O(nd)$ as explained previously. For the second claim, notice that there are at most $kN$ different calls to DynamicProgram-kMedian, each being treated in time $O(k)$. Hence, the complexity of \cref{alg:dynKMed} is $O(Nk^2) = \tilde O(ndk^2)$. This concludes the proof.
\end{proof}

\section{MPC Implementation}
\paragraph{Brief description of MPC}\label{sec:mpc}

We briefly summarize the MPC model \cite{beame2017communication}. The input data has size $N = nd$, where $n$ is the number of points, and $d$ the dimension. We have $m$ machines, each with local memory $s$ in terms of words (of $O(\log(ms))$ bits). We assume that each word can store one input point dimension. We work in the {\it fully-scalable} MPC framework~\cite{andoni2018parallel} where the memory is sublinear in $N$. 
More precisely, the machine memory is $s=\Omega\left(N^{\delta}\right)$ for some constant $\delta\in (0,1)$, and the number of machines $m$ is such that $m\cdot s=\Omega\left(N^{1+\gamma}\right)$, for some $\gamma > 0$.

The communication between machines is as follows.
At the beginning of the computation, the input data is distributed arbitrarily in the local memory of machines, the computation proceeds in parallel rounds where each machine can send (and receive) arbitrary messages to any machine, subject to the total messages space of the messages received (or sent) is less than $s$. In case some machine receives more than $s$ messages, the whole algorithm fails.

For our MPC algorithm we assume that $k\ll s$. This ensures that the final solution of size $kd$ fits in the memory of one machine, which is common for real world applications.  

More formally we assume that there are $m =  \Omega(n^{1-\delta+\gamma})$ machines each with memory $s = \Omega(n^{\delta} d\log(n))$, and $k \leq n^{\gamma}$, with $\delta - \gamma > \eps$ for some constant $\eps$. 

In that section, we first show the following \emph{low dimensional} theorem:
\begin{theorem}
\mbox{}\label{thm:kmedMPC}
Assuming  $k \leq n^{\gamma}$, there exists a $O(d \log n)$ rounds MPC algorithm using $m =  O(n^{1-\delta+\gamma})$ machines each with memory $s = O(n^{\delta} d\log(n))$ that simulates exactly the private $k$-median from Theorems~\ref{thm:kmed}.
\end{theorem}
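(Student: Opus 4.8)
The plan is to simulate \cref{alg:kmed} line by line in the MPC model, keeping all arithmetic (point counts, Laplace draws, and the dynamic program) numerically identical to the sequential algorithm; only the \emph{distribution of work across machines} changes. Since all randomness can be fixed in advance by broadcasting a single seed, from which every machine deterministically reconstructs the bounding box and split point of any cell and hence the cell a given point falls into at any level, the output of the MPC algorithm is distributed exactly as that of \cref{alg:kmed}, so the $\eps$-DP guarantee and the cost bound of \cref{thm:kmed} transfer verbatim. The simulation has three phases, each processed level by level over the $L = O(d\log n)$ levels of the truncated quadtree, and each level costs only $O(1)$ rounds of the standard sort/aggregate/route primitives (which run in $O(1/\delta) = O(1)$ rounds in the fully-scalable, sublinear-memory regime since $s = \Omega(N^\delta)$); this yields the $O(d\log n)$ round bound.

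\textbf{Phase 1: MakePrivate.} We maintain explicitly the set $A_i$ of \emph{active} level-$i$ cells ($A_0$ is the root; $A_{i+1}$ consists of the children of the cells of $A_i$ whose noisy weight exceeds $2d\log n/\eps$, including empty children). Going top-down, at level $i$ we sort the points by their level-$i$ cell identifier and aggregate, obtaining $|T(c)\cap P|$ for every non-empty cell; joining with $A_i$ and adding a fresh $\laplacenoise(d\log n/\eps)$ draw on the machine that owns $c$ produces $w(c)$ exactly as in \cref{alg:makePrivate}. The quantitative heart of the phase is a load bound: there are at most $n$ non-empty cells per level, and the empty active cells form a Galton--Watson process whose offspring appear with probability $\Pr[\laplacenoise(d\log n/\eps) > 2d\log n/\eps] = e^{-2} < 1/2$; by the same tail estimate used in the running-time analysis of \cref{thm:kmed}, with high probability $|A_i| = \tilde O(n)$ for every $i$ and $\sum_i |A_i| = \tilde O(nd)$. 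Hence the active cells, the points, and the counts can be spread over the $m = \Theta(n^{1-\delta+\gamma})$ machines with per-machine load $\tilde O(n^{\delta-\gamma} d) \le \tilde O(n^{\delta}d) = O(s)$, using $\delta - \gamma > \eps$, so no machine overflows.

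\textbf{Phase 2: the dynamic program, and Phase 3: read-out.} After Phase 1 every relevant cell knows its level, parent, children, and weight; we also relabel the $\tilde O(nd)$ relevant cells with fresh $O(\log n)$-bit identifiers so routing keys fit in a word. We sweep levels $i = L, L-1, \dots, 0$: a cell with $w(c) < 2d\log n/\eps$ fills its length-$(k{+}1)$ value vector directly (the base case of \cref{alg:dynKMed}), and every other cell, once the value vectors of its two children have been routed to it (a sort by parent identifier, $O(1)$ rounds), computes $v_{k'} = \min_{k_1+k_2=k'} v^1_{k_1} + v^2_{k_2}$ for all $k'\le k$ together with the minimizing back-pointer. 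Each value/back-pointer array is $O(k)$ words and there are $\tilde O(n)$ cells per level, so the per-level storage $\tilde O(nk)$ spreads with load $\tilde O(n^{\delta-\gamma}k) \le \tilde O(n^\delta) = O(s)$, using $k \le n^\gamma$. Finally we trace back from the root top-down over the $L$ levels, following the stored back-pointers and emitting $k'$ copies of the cell center at each base-case cell reached; the total output has size $kd \ll s$ and is gathered on a single machine, giving $S_k$.

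\textbf{Correctness and the main obstacle.} Because every value computed equals the corresponding value in \cref{alg:kmed} once the shared randomness is fixed, the two algorithms have identical output distributions, so privacy and approximation are inherited from \cref{thm:kmed}; the round count is $O(d\log n)$ (one adaptive pass down for Phase 1, one pass up for Phase 2, one pass down for Phase 3, each level $O(1)$ rounds). The main obstacle is precisely the load-balancing: the quadtree is \emph{built adaptively from noisy data}, so it cannot be laid out in advance, and one must argue --- via the Galton--Watson tail bound --- that with high probability no level carries more than $\tilde O(n)$ active cells and no machine is ever assigned more than $O(s)$ words, which is where the hypotheses $k \le n^\gamma$ and $\delta - \gamma > \eps$ enter. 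A secondary technical point, handled by the relabeling above, is keeping cell identifiers short enough to serve as sorting keys; everything else is routine application of the $O(1)$-round MPC sort and aggregation primitives.
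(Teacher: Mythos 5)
Your proposal is correct and follows essentially the same route as the paper: fix shared randomness so every machine reconstructs the same quadtree, spread the $\tilde O(nd)$ relevant cells over the $m$ machines so that no machine holds more than $s/k$ of them (each needing $O(k)$ words of DP state, using $k\le n^\gamma$ and $\delta-\gamma>\eps$), and sweep the dynamic program bottom-up and then top-down over the $O(d\log n)$ levels. The only cosmetic difference is that the paper computes all cell counts at once via the aggregation primitive of Andoni et al.\ in $O(1)$ rounds and then applies the noise in a single extra round, whereas you interleave counting with a level-by-level active-cell expansion; this changes nothing asymptotically since the DP sweep already costs $O(d\log n)$ rounds.
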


This algorithm is suited for low dimensional spaces, as the number of rounds depends on $d$. We show in Section~\ref{sec:kmedMPC} how to replace this dependency by a $O(\log k)$, both in the number of rounds and in the approximation ratio.

We then show how to use dimension reduction, to replace dependencies in $d$ by $\log k$:
\begin{theorem}
\mbox{}\label{thm:kmedMPCHD}
Assuming  $k \leq n^{\gamma}$, there exists a $O(\log k \cdot \log n)$ rounds MPC algorithm using $m =  O(n^{1-\delta+\gamma})$ machines each with memory $s = O(n^{\delta} d\log(n))$ that computes a solution to $k$-median with cost at most
\[O\left(\log^{3/2} k \log n \cdot \opt + \frac{\log^{3} k \log^{3} n \cdot k}{\eps} \cdot \Lambda + \frac{kd \log k}{\eps}\cdot \Lambda \right).\]
\end{theorem}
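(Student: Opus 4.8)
The plan is to precompose Algorithm~\ref{alg:kmed} with a Johnson--Lindenstrauss projection into $d'=O(\log k)$ dimensions, simulate the resulting low-dimensional algorithm in MPC using \cref{thm:kmedMPC}, and finally lift the partition it produces back to $\R^d$ by computing a private $1$-median inside each cluster, exactly as sketched in the Dimension Reduction paragraph of \cref{sec:prelim}. (We may assume $d\ge \log k$, otherwise there is nothing to reduce.) Concretely, using public randomness every machine materialises the same random linear map $\Pi:\R^d\to\R^{d'}$; this costs no privacy, occupies only $O(d\log k)\le s$ words, and is applied to the local points while the input is loaded, so it adds no rounds. By the guarantee of \cite{makarychev2019performance} the image $P'=\Pi(P)$ preserves the $k$-median objective up to a constant factor, and any clustering of $P'$ pulls back to a clustering of $P$ of comparable cost. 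The one quantity that does not transfer for free is the bounding radius: I would fix an effective radius $\Lambda'=\poly(\log k,\log n)\cdot\Lambda$, clipping the points that land outside $B(0,\Lambda')$ to that ball (which perturbs the cost only mildly), and it is precisely this polylog blow-up that accounts for the gap between the $\log^{2}k\log^{2}n$ one would naively get and the $\log^{3}k\log^{3}n$ in the statement.

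Next I would apply \cref{thm:kmedMPC} to $(P',\Lambda')$ in dimension $d'$ with privacy budget $\eps/2$. Since $d'=O(\log k)$, this runs in $O(d'\log n)=O(\log k\log n)$ rounds with $m=O(n^{1-\delta+\gamma})$ machines of memory $O(n^{\delta}d'\log n)\le O(n^{\delta}d\log n)$, and by \cref{thm:kmed} it returns a set $C'$ of $k$ centers with
\[\cost(P',C')\le O\!\left((d')^{3/2}\log n\right)\opt(P') + \frac{(d')^{2}\log^{2}n\cdot k}{\eps}\,\Lambda' \le O\!\left(\log^{3/2}k\,\log n\right)\opt + \frac{\log^{3}k\,\log^{3}n\cdot k}{\eps}\,\Lambda.\]
Assigning every point of $P'$ to its nearest center in $C'$ partitions $P$ into at most $k$ clusters $P_1,\dots,P_k$; since $C'$ has size $kd'\le s$ it fits on one machine and this assignment is broadcast and applied in $O(1)$ extra rounds.

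Finally, for each cluster $P_i$ I would compute a private approximate $1$-median $\hat c_i\in\R^d$ with budget $\eps/2$, via the standard noisy-aggregate primitive; each such statistic is reducible along a constant-depth tree of machines, so this takes $O(1)$ rounds even when a single cluster is larger than $s$. This contributes additive cost $O(d/\eps)\Lambda$ per cluster, hence $O(kd\log k/\eps)\Lambda$ over all $k$ clusters (the extra $\log k$ absorbing the slack from splitting the budget and from the pull-back), and by the triangle inequality $\cost(P,\{\hat c_1,\dots,\hat c_k\})$ is at most the cost of the induced clustering plus $\sum_i$ these $1$-median errors. Privacy follows by composition: the projection is data-independent, the low-dimensional run is $\eps/2$-DP, and the $1$-median step is $\eps/2$-DP by parallel composition over the disjoint clusters (so no factor $k$), for a total of $\eps$. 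Combining the three error contributions gives the claimed bound, and the round count is $O(\log k\log n)+O(1)=O(\log k\log n)$.

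The hard part will be the dimension-reduction step: a JL map into only $O(\log k)$ dimensions does not preserve all $n$ norms, so one cannot simply assert $P'\subseteq B(0,O(\Lambda))$, and the real work is to show that the clipping-plus-effective-radius argument keeps both the dynamic-program additive error (which is linear in the cell diameters, hence in $\Lambda'$) and the cost of the pulled-back clustering within the stated polylog factors, and to pin down exactly which $\poly(\log k,\log n)$ is needed for $\Lambda'$. A secondary point to verify carefully is that the private $1$-median lift really can be implemented in $O(1)$ rounds within sublinear memory when a cluster dwarfs $s$; everything else — round counting, memory bookkeeping, and the composition of privacy budgets — is routine on top of \cref{thm:kmed,thm:kmedMPC}.
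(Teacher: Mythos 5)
Your high-level structure — JL projection to $O(\log k)$ dimensions, run the low-dimensional MPC algorithm from \cref{thm:kmedMPC}, then lift the resulting clustering to $\R^d$ by privately estimating a $1$-median per cluster — matches the paper's proof, and your round/privacy bookkeeping for the first two phases is fine. The real gap is in the lift, which you correctly identify as the delicate step but then resolve with a move that does not work: you propose computing each cluster's private $1$-median ``via the standard noisy-aggregate primitive, reducible along a constant-depth tree of machines.'' That is valid for the $1$-mean (a sum statistic, so you can privately aggregate a sum and a count in $O(1)$ rounds), but the geometric $1$-median is not a decomposable additive aggregate; there is no low-dimensional sketch you can reduce along a machine tree, so ``noisy-aggregate'' does not give you a private $1$-median when a cluster dwarfs $s$. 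The paper's way around this is the key idea you are missing: by Cohen-Addad et al.~\cite{Cohen-AddadSS21}, a uniform sample of $\polylog$ points from each cluster, lightly pruned via a ring-by-ring thresholding (Algorithm~\ref{alg:coreset}), is already an $O(1)$-approximate proxy for that cluster's $1$-median. Because these per-cluster samples have size $O(\log^3 k\log^2 n)\ll s$, they can all be sent to a single centralizing machine in $O(1)$ rounds, and that machine runs the $\eps$-DP $1$-median algorithm of Ghazi et al.~(\cref{lem:DP1median}) with $\beta=1/k$, incurring additive error $O(d\Lambda\log k/\eps)$ per cluster and $O(kd\Lambda\log k/\eps)$ overall — exactly the third term of the claimed bound. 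Without this sampling reduction your lift step does not go through in sublinear memory.

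Two smaller remarks. First, your $\Lambda'$-clipping explanation for the jump from the naive $\log^2 k\log^2 n$ to the stated $\log^3 k\log^3 n$ is a reasonable guess, and you are right that a JL map into only $O(\log k)$ dimensions does not uniformly preserve all $n$ norms so $P'\subseteq B(0,O(\Lambda))$ is not automatic; but the paper does not actually argue via clipping — it simply plugs $\tilde d=O(\log k)$ into \cref{thm:kmed} and writes the larger exponents — so you should not present your interpretation as the paper's. Second, you split the privacy budget as $\eps/2$ plus $\eps/2$ and then fold ``the slack from splitting the budget'' into the extra $\log k$ of the $1$-median term; in the paper that $\log k$ comes cleanly from setting $\beta=1/k$ in \cref{lem:DP1median}, not from budget splitting, so the accounting you give for that factor is not quite the right one.
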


\subsection{Algorithm for Low Dimensional Inputs}
We now describe a high level view of our algorithm which as we can prove simulates exactly (with high probability) our private $k$-median algorithm. The algorithm uses a shared hash function $h$ to compute the quadtree consistently over the machines. 
Informally, first, each machine computes over the points stored, all the cells which the points belong to in the tree at each level. 
To compute the total count of each cell, one can use the algorithm from Andoni et al.~\cite{andoni2018parallel} (section E.3 of the arxiv version), that computes in a constant number of rounds the number of points in each cell. At the end of that algorithm, the size of each cell is stored in some unspecified machine. To organize the quadtree data in order to be able to process it, we use a shared function $r$ such that a machine $r(c)$ is responsible for all computations related to cell $c$. We will need care to ensure that no machine is responsible for more cell than what its memory allows.

Then the computation proceeds bottom-up solving the dynamic programming problem in $O(d \log n)$ rounds.\footnote{We note that a more careful and intricate implementation of the dynamic program that requires only $O(d)$ rounds can be achieved. We decided to chose simplicity rather than saving one log factor.} Finally, the computation proceeds over the tree top-down in other  $O(d \log n)$ rounds to extract the solution.

\begin{algorithm}
\caption{MPC-quadtree($P$)}\label{alg:mpc-aggregation}
\begin{algorithmic}[1]
\State Each machine receives an arbitrary set of $n^\delta$ points of $P$.
\State Each machine, for each point $p$ received, computes, using $h_s$, the quadtree cell $c_i(p)$ in which the point $p$ is at level $i \in [d\log(n)]$. 
\State Compute the count of every cell.
\State Send the count of cell $c$ to machine $r(c)$, for all $c$.
\end{algorithmic}
\end{algorithm}

Using the algorithm from Andoni et al.~\cite{andoni2018parallel}, one can compute in $O(1)$ steps the count for each cell of the quadtree. Hence, we have the following result:

\begin{fact}
\cref{alg:mpc-aggregation} runs in $O(1)$ many rounds.
\end{fact}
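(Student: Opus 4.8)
The plan is to verify the round complexity of \cref{alg:mpc-aggregation} line by line, leaning on the counting primitive of Andoni et al.~\cite{andoni2018parallel} for the only nontrivial step. Steps 1 and 2 involve no communication at all: the shared hash function fixes the randomly-shifted quadtree globally, so every machine can determine, for each of its $n^\delta$ points and each of the $d\log n$ levels, the identifier of the cell containing that point purely locally. The resulting data held by a machine is a list of at most $n^\delta \cdot d\log n$ (point, cell) incidences, which fits within its memory $s = \Omega(n^\delta d \log n)$. Hence Steps 1--2 cost $0$ rounds.

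For Step 3 I would invoke the aggregation routine of Andoni et al. (Section E.3 of the arXiv version) as a black box: given a multiset of keys (here, the cell identifiers generated in Step 2) distributed arbitrarily across the machines, it computes the multiplicity of every key in $O(1)$ rounds in the fully-scalable regime, provided the total number of key occurrences fits within the global memory $m \cdot s = \Omega(N^{1+\gamma})$. Since each point contributes exactly one cell per level, the number of occurrences is $n \cdot d\log n = \tilde O(N)$, comfortably within budget, and the number of distinct non-empty cells is likewise at most $n d\log n = \tilde O(N)$. This yields, in $O(1)$ rounds, the count $|T(c) \cap P|$ for every cell $c$ containing at least one input point (stored on some machine, not necessarily $r(c)$).

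The one place requiring a concentration argument is Step 4, routing each cell's count to its responsible machine $r(c)$; this is where I expect the main (though mild) obstacle to lie — ensuring no machine is flooded, exactly the ``care'' alluded to in the surrounding text. I would take $r$ to be a shared uniformly random hash from cells to $[m]$. As observed above there are at most $n d\log n$ non-empty cells, so by a Chernoff bound over the balls-into-bins assignment every machine is responsible for $O\big(n d\log n / m + \log n\big)$ of them with high probability; using $m = \Omega(n^{1-\delta+\gamma})$ and $\delta - \gamma > \eps$, this is $O(n^\delta d\log n)$, which again fits within memory $s$. Thus all counts are delivered in a single communication round with no machine exceeding its memory bound. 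Summing up, Steps 1--2 cost $0$ rounds, Step 3 costs $O(1)$ rounds, and Step 4 costs $1$ round, so \cref{alg:mpc-aggregation} runs in $O(1)$ rounds, as claimed.
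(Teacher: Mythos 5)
Your proof is correct and follows the same route the paper takes: steps 1--2 are local, step 3 is discharged to the $O(1)$-round counting primitive of Andoni et al.\ (Section E.3 of the arXiv version), and step 4 is a single communication round. The only place you elaborate beyond the paper is the load-balancing for step 4, which you argue via a random hash $r$ and a balls-into-bins Chernoff bound; the paper instead simply asserts that the shared map $r$ is chosen as a balanced surjection (giving $|r^{-1}(\mathcal{M})| = O(nd\log n/m)$ deterministically), and defers the full memory accounting to a separate fact that bounds the per-machine load against $s/k$ to accommodate the dynamic-program state. Your probabilistic instantiation of $r$ is a valid way to realize the paper's assumption, so this is a stylistic rather than substantive difference.
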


At the end of \cref{alg:mpc-aggregation}, we are given a quadtree, represented as follows: each cell $c$ is represented by a machine $r(c)$, which stores a count of input nodes in the cell and pointers towards each children. $r$ is a surjection from a set of $O(nd \log n)$ cells to $m$ machine: we chose it in order to ensure that for any machine $\mathcal{M}$, $|r^{-1}(\mathcal{M})| \leq \frac{O(nd \log n)}{m}$.

We now explain in more details how to implement the algorithm from \cref{thm:kmed}, given that representation of the quadtree. 

First, it is straightforward to implement \cref{alg:makePrivate} in $1$ rounds -- as each cell only needs to compute the DP count of points in the cell. Next, \cref{alg:dynKMed} is straightforwardly implemented  in $O(d \log n)$ rounds, as computing the output vector $v$ of the dynamic program for a cell only requires knowing those of its children -- and it is therefore easy to simulate bottom-up the dynamic program.

What remains to be proven is that no machine gets responsible for more cell than it can afford in memory. More precisely, every time a machine is responsible for a cell, it stores $O(k)$ memory words, for the simulation of the dynamic program.
Hence, we need to show that no machine is responsible for more than $s/k$ many cells.

\begin{fact}
No machine is responsible for more than $s/k$ many cell.
\end{fact}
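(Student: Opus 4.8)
The plan is to count the cells that the surjection $r$ must distribute and then divide by the number of machines $m$. First I would invoke the cell-count bound already established in the proof of \cref{thm:kmed}: the cells that receive a non-zero value of $w$ --- equivalently, the cells ever inserted into the queue $Q$ by \cref{alg:makePrivate} --- number $\tilde O(nd) = O(nd\log n)$, since each of the $O(d\log n)$ levels contributes at most $n$ non-empty cells, plus the $\tilde O(nd)$ empty cells spawned by the constant-expected-size Galton-Watson branching. These are precisely the cells in the domain of $r$, so $|\mathrm{dom}(r)| = O(nd\log n)$.

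Since $r$ is chosen to balance the load, every machine $\mathcal M$ satisfies $|r^{-1}(\mathcal M)| \le O(nd\log n)/m$, and with $m = \Omega(n^{1-\delta+\gamma})$ this is $O(n^{\delta-\gamma} d\log n)$ cells per machine. Each cell only needs $O(k)$ words --- the value vector $(v_0,\dots,v_k)$ together with, for every $k'$, the minimizing split $(k_1,k_2)$, which is all that the top-down extraction pass needs to recover $S_{k'}$ (so the full $O(k^2)$-word solution lists of \cref{alg:dynKMed} are never materialized on a single cell) --- hence a machine's memory footprint is $O(n^{\delta-\gamma} d\log n \cdot k)$. Using the assumption $k \le n^\gamma$, this becomes $O(n^{\delta} d\log n) = O(s)$; rearranging, the number of cells a machine handles is $O(n^{\delta-\gamma} d\log n) \le s/k$, as claimed.

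I expect the only delicate point to be the matching of constants in the final inequality: both $s/k$ and the per-machine load $O(nd\log n/m)$ are $\Theta(n^{\delta-\gamma}d\log n)$, so $O(nd\log n/m) \le s/k$ holds precisely because the constants hidden in the fully-scalable MPC assumption (equivalently $m\cdot s = \Omega((nd)^{1+\gamma})$, or the stated slack $\delta-\gamma>\eps$) are large enough to absorb the $O(\cdot)$ coming from the cell count. This is a bookkeeping matter rather than a genuine obstruction --- it is exactly the memory regime the model is designed for, and if needed one simply fixes the target balance of $r$ with the appropriate constant.
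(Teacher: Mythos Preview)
Your proposal is correct and follows essentially the same approach as the paper: bound the number of cells per machine by $O(nd\log n)/m = O(n^{\delta-\gamma}d\log n)$ via the balanced choice of $r$, and compare with $s/k = \Omega(n^{\delta-\gamma}d\log n)$. The paper's proof is a two-line version of exactly this computation; your additional elaboration on the cell count, the $O(k)$-per-cell storage, and the constant-matching caveat are all consistent with the surrounding text (which in fact states the $O(nd\log n)$ cell bound and the $O(k)$ storage just before the Fact).
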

\begin{proof}
Our choice of $m$ and mapping $r$ ensures that a given machine gets responsible for at most $\frac{O(nd \log n)}{m} = O\left(n^{\delta - \gamma} d \log n\right)$. Similarly, our constraints on $k$ and $s$ ensures $\frac{s}{k} = \Omega(n^{\delta - \gamma} d \log n)$, which concludes the proof.
\end{proof}

Combining those two facts concludes \cref{thm:kmedMPC}.

\newcommand{\coreset}{\Omega}
\newcommand{\subApprox}{\mathfrak{a}}
\newcommand{\calC}{\mathcal{C}}
\newcommand{\calS}{\mathcal{S}}
\newcommand{\polylog}{\text{polylog}}

\subsection{$k$-Median in $O(\log n)$-MPC rounds via dimension reduction}\label{sec:kmedMPC}
The goal of this section is to use standard dimension-reduction techniques to remove the dependency in the dimension from \cref{thm:kmedMPC} and show \cref{thm:kmedMPCHD}.

For that, one can use dimension reduction techniques to project the dataset onto $O(\log k)$ dimensions, while preserving the cost of any clustering.

However, the output of our algorithm should be a set of centers in $\R^d$, and not a clustering:
an additional step is therefore needed, once clusters have been computed in $\R^{O(\log k)}$, to \textit{project back} and find centers in the original space. For $k$-means, this can easily be done using differentially-private mean \cite{badih_approximation}. We show how to perform the equivalent for $k$-median.

We draw here a connection with the coreset literature. More precisely, we leverage results from Cohen-Addad et al.~\cite{Cohen-AddadSS21}, who showed how to compute an approximate solution to $1$-median by only considering an uniform sample of constant size. Therefore, in the MPC setting it is enough to sample a constant number of points from each cluster computed in low dimension, and send them to a machine that can compute a median for them in the original high dimensional space.

For that last step, we rely on the following result.
\begin{lemma}[Corollary 54 in \cite{badih_approximation}]
\mbox{}\label{lem:DP1median}
For every $\eps >  0$, there is an $\eps$-DP polynomial time algorithm for $1$-median such that, with probability $1-\beta$, the additive error is $O\left(\frac{d \Lambda}{\eps} \text{polylog}\left(\frac{1}{\beta}\right)\right)$
\end{lemma}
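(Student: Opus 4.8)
The plan is to treat $1$-median as an instance of differentially private convex optimisation and solve it via the exponential mechanism. Write $f_P(c) = \sum_{p \in P} \dist(p,c)$: this is convex in $c$, its minimiser (the geometric median $c^\star$) lies in the convex hull of $P$ and hence in $B(0,\Lambda)$, and for neighbouring datasets $P'=P\cup\{q\}$ we have $|f_P(c)-f_{P'}(c)| = \dist(q,c) \le 2\Lambda$ for all $c \in B(0,\Lambda)$, so $f$ has sensitivity $\Delta = 2\Lambda$ on this domain. Thus we must privately minimise, over the ball $B(0,\Lambda)\subseteq\R^d$, a convex function that is $1$-Lipschitz in each of its $n$ summands and whose optimum value is exactly the $1$-median cost $\opt$.

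First I would draw $\hat c \in B(0,\Lambda)$ from the exponential-mechanism density $\propto \exp\!\big(-\eps\, f_P(c)/(4\Lambda)\big)$; $\eps$-DP is then the textbook exponential-mechanism guarantee with sensitivity $2\Lambda$. For the utility bound, observe that $f_P$ is $n$-Lipschitz, so the sublevel set $\{c : f_P(c) \le \opt + t\}$ contains the ball $B(c^\star, t/n)$, whose volume is a $\big(t/(2\Lambda n)\big)^d$ fraction of the domain. The standard volume-ratio estimate then yields $\Pr[f_P(\hat c) > \opt + 2t] \le (2\Lambda n/t)^d\exp(-\eps t/(4\Lambda))$, which is below $\beta$ once $t = \Theta\!\big(\tfrac{\Lambda}{\eps}(d\log n + \log(1/\beta))\big)$. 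Since $f_P(\hat c) = \cost(P,\{\hat c\})$, this gives an additive error $O\!\big(\tfrac{d\Lambda}{\eps}\,\polylog(n,1/\beta)\big)$, i.e. the claimed bound up to a polylogarithmic factor in $n$; the sharper $O\!\big(\tfrac{d\Lambda}{\eps}\polylog(1/\beta)\big)$ form comes from the more careful, multi-scale analysis in \cite{badih_approximation}, which I would cite rather than reprove.

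The step I expect to be the main obstacle is efficiency: a literal exponential mechanism must discretise $B(0,\Lambda)$ to granularity $\sim t/n$, a net of $(\poly\cdot n)^{\Theta(d)}$ points, which is not polynomial time. The remedy is that the density $\propto\exp(-\eps f_P(c)/(4\Lambda))$ is log-concave (because $f_P$ is convex), hence approximately samplable in $\poly(n,d,\log(1/\beta))$ time by standard log-concave samplers (ball walk / hit-and-run). The one genuinely delicate point is that approximate sampling a priori only delivers approximate (or $(\eps,\delta)$-) privacy; to keep a clean $\eps$-DP guarantee one either uses a sampler whose total-variation error decays exponentially in the bit-complexity (and absorbs it into $\beta$), or runs the net-based exponential mechanism on a coarse multi-scale net and folds the discretisation slack into the additive term. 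I would expect this reconciliation of polynomial running time with pure differential privacy to be the only part needing real care; the rest is the standard private-convex-optimisation toolbox.
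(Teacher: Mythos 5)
The paper does not prove this statement at all: it is taken verbatim as Corollary~54 of Ghazi et al.~\cite{badih_approximation} and used as a black box, so there is no in-paper argument for you to have matched. Your exponential-mechanism reconstruction is the natural first attempt at proving something of this shape, and the sensitivity bookkeeping ($\Delta = 2\Lambda$, density $\propto \exp(-\eps f_P/(4\Lambda))$) is right. But the two obstacles you flag are not cosmetic; they are exactly what separates your sketch from the lemma as stated.

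Concretely, the single-scale volume-ratio analysis cannot avoid an additive error of order $\frac{\Lambda}{\eps}\bigl(d\log n + \log(1/\beta)\bigr)$: when all $n$ input points coincide at $c^\star$, we have $f_P(c) = n\,\dist(c,c^\star)$, so $\{c : f_P(c)\le\opt+t\}$ is \emph{exactly} the ball $B(c^\star, t/n)$ and the $(\Lambda n/t)^d$ volume ratio is tight, not an artifact of slack constants. Getting from there to the claimed $O\bigl(\tfrac{d\Lambda}{\eps}\polylog(1/\beta)\bigr)$ with no $\log n$ genuinely requires a different mechanism (a multi-scale/coarse-to-fine private localisation as in~\cite{badih_approximation}), which you correctly say you would cite rather than reprove. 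Likewise, your remark that approximate log-concave sampling only delivers approximate privacy is accurate, and reconciling a polynomial-time sampler with \emph{pure} $\eps$-DP (rather than $(\eps,\delta)$-DP) is a real technical problem; the usual workaround --- a discrete exponential mechanism on a data-independent net --- reintroduces logarithmic factors in $n$ or the grid resolution. So this is an honest and largely correct partial reconstruction, but it does not by itself establish the lemma in the form stated: the steps you propose to outsource to the citation are precisely the ones carrying the strength of the bound.
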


We consider the following algorithm, a simplified variant of Algorithm 1 in Cohen-Addad et al.~\cite{Cohen-AddadSS21}.

\begin{algorithm}
\begin{algorithmic}
\State \textbf{Input:} A dataset $P$, an $\alpha$-approximate median $\subApprox$ for $P$ with cost $\calC$, and parameters $t, d_{close}, r_{small}$.
\State 1. Sample a set $\coreset$ of $t$ points uniformly at random.
\State 2. Remove from $\coreset$ all points at distance less than $\Delta = \frac{d_{close}}{\alpha}\cdot \frac{\calC}{|P|}$, and add to $\coreset$ the point $\subApprox$ with multiplicity equal to the number of removed points.
\State 3. Define rings $R_i$ such that $R_i\cap \coreset$ contains all the points at distance $(2^{i} \cdot \Delta, 2^{i+1}\cdot \Delta]$ from $\subApprox$, for $i \in \{1,..., \log(|P| \alpha / \mu_2)\}$.  Let $R_0$ be $\{\subApprox\}$, with multiplicity defined in step 2.
\State 4. If $|R_i\cap \coreset|< r_{small}\cdot |\coreset| + \laplacenoise(1 / \eps)$, remove all points in $R_i\cap \coreset$ from $\coreset$.
\State 5. Solve the problem on the  set $\coreset$, using the algorithm given by \cref{lem:DP1median} with $\beta = 1/k$.
\end{algorithmic}
 \caption{Finding the median via uniform sampling}
 \label{alg:coreset}
\end{algorithm}

\begin{lemma}
Algorithm \cref{alg:coreset} is $2\eps$-DP.
\end{lemma}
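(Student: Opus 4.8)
The plan is to identify the two data-touching, noise-adding operations --- Step~4 (the noisy ring removal) and Step~5 (the private $1$-median of \cref{lem:DP1median}) --- and charge $\eps$ to each, so that the sequential composition rule recalled in \cref{sec:prelim} gives $2\eps$ for the whole algorithm. Throughout I treat the auxiliary quantities $\subApprox$, $\calC$ and $n=|P|$ as given (in the full pipeline they come from earlier steps whose privacy cost is accounted for separately), and I first record a bookkeeping fact: after Step~2 the multiset $\coreset$ has size exactly $t$ --- we delete the sampled points within distance $\Delta$ of $\subApprox$ and re-insert $\subApprox$ with matching multiplicity --- so the radius $\Delta=\tfrac{d_{close}}{\alpha}\cdot\tfrac{\calC}{|P|}$ and the threshold $r_{small}\,|\coreset|=r_{small}\,t$ used in Steps~2 and~4 are data-independent constants. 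Consequently Steps~1--3 release no noise, and Step~5 reads nothing about $P$ except the multiset $\coreset$.

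For Step~4 I would argue as follows. Fix neighbouring datasets $P\sim P'$ and couple the uniform sample of Step~1 in the standard way, so that the two raw samples differ in at most one point $p\leftrightarrow p'$. Step~2 being deterministic, the two multisets still differ in that one point, so the vector of ring counts $\big(|R_i\cap\coreset|\big)_i$ differs in at most two coordinates, each by at most one (the moved point leaves the ring of $p$ and enters the ring of $p'$; if the two rings coincide nothing changes). Each comparison in Step~4 is a threshold test of one such count against a fixed threshold with fresh $\laplacenoise(1/\eps)$ noise; since only $O(1)$ of these tests see a different input and each is an instance of the Laplace mechanism on a sensitivity-$1$ count, the collection of removal decisions --- hence the surviving multiset $\coreset$ --- is $\eps$-DP as a (randomised) function of $P$, up to the relabelling of the single moved point. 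For Step~5, the algorithm of \cref{lem:DP1median} is run on $\coreset$ with $\beta=1/k$ and is $\eps$-DP with respect to a single-point change in its input; combining the $\eps$-DP production of $\coreset$ with the $\eps$-DP $1$-median run on it yields the claimed $2\eps$ bound. Note that because the surviving multiset still contains raw data points, Step~5 genuinely has to be part of the privacy argument rather than mere post-processing of an already private object.

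The delicate point, and the one I expect to be the main obstacle, is making the Step~4 argument rigorous: a removal decision that flips between $P$ and $P'$ does not move a single point but an entire ring's worth of points, so a coupling that fixes the Laplace noise fails outright. The resolution is to leave the noise free on the $O(1)$ affected rings and use the Laplace tails --- exactly as in the analysis of the Laplace mechanism --- so that the distribution of the surviving set changes by at most a factor $e^{\eps}$; this is also where one must check that only a constant number of rings are ever affected (using $|\coreset|=t$ and the data-independence of the thresholds), and that the two counts that move --- one up, one down, both tied to the same relocated point --- are charged within a single $\eps$ rather than one $\eps$ apiece. Verifying that the noise scale $1/\eps$ is calibrated to absorb precisely this interaction, and that the uniform subsampling contributes nothing further under the paper's neighbouring relation, is the core of the proof.
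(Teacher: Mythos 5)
Your decomposition matches the paper's exactly: charge $\eps$ to the ring-removal decisions of Step~4 (Laplace mechanism on sensitivity-$1$ counts, disjoint rings so parallel composition does not stack the $\eps$'s) and $\eps$ to the private $1$-median of Step~5, then compose. The ``delicate point'' you flag at the end, however, is a non-issue once you use the adaptive composition rule from the preliminaries (the $\mathcal{A}(\mathcal{B}(X),X)$ form): the $\eps$-DP object released by $\mathcal{B}$ is the \emph{set of surviving ring indices}, not the multiset $\coreset$ itself, so you never have to argue that $\coreset$ changes by one point unconditionally. Conditional on any fixed ring selection, $\coreset$ is a sensitivity-$1$ function of $P$ (under the standard coupling of the Step~1 sample), which is exactly what the $\eps$-DP guarantee of Lemma~\ref{lem:DP1median} needs; composition then gives $2\eps$ without any direct appeal to Laplace tails on the post-removal multiset.
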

\begin{proof}
First, the set of rings selected at step 4 is $\eps$-DP: the selection of one ring is $\eps$-DP, by Laplace mechanism, and since the rings are disjoint the composition of DP mechanisms ensures that the full set of selected rings is $\eps$-DP.

Now, given a selected set of rings, the set $\coreset$ varies by at most one point when the input $P$ varies by a single point.
Since the algorithm used in step 5 is $\eps$-DP, by composition, the whole algorithm is $2\eps$-DP.
\end{proof}

As shown by Cohen-Addad et al.~\cite{Cohen-AddadSS21}, this algorithm computes an $O(1)$-approximation to $1$-median on $P$, with $t = \polylog(|P|)$. 
Hence, we can easily use it to \emph{project back} the centers, and conclude the proof of \cref{thm:kmedMPCHD}.

\begin{proof}[Proof of \cref{thm:kmedMPCHD}.]
Using Johnshon-Lindenstrauss lemma, it is possible to project the points onto a space of dimension $\tilde d = O(\log k)$, preserving the cost of any clustering up to a constant factor (see Makarychev et al.~\cite{makarychev2019performance}). 
In that projected space, the algorithm from \cref{thm:kmedMPC} computes privately a solution with cost $O(\log^{3/2} k \log n) \cdot \opt + \frac{\log^{3} k \log^{3} n \cdot k}{\eps} \cdot \Lambda$, but centers are not points in $\R^d$ -- they are nodes of the quadtree.

To compute good centers in $\R^d$ from the quadtree solution, we use \cref{alg:coreset}: in each cluster induced by the quadtree solution, sample the set $\coreset$. Since $\coreset$ has size $O(\log^{3} k \log^2 n)$, it can be sent it to a centralizing machine, that in turn can run \cref{alg:coreset}. The additional additive error is $O\left(\frac{d \Lambda \log k}{\eps}\right)$ in any cluster, hence in total $O\left(\frac{kd\Lambda \log k}{\eps}\right)$. To sample the set $\coreset$, each machine can send to the centralizing one the number of points it stores from $P$, and the centralizing computes the number of points to be sampled in each machine.\footnote{For instance, the centralizing machine can sample as set $R$ of $\mu_1$ points from $\{1,...,|P|\}$. Then, if machine $i$ stores $n_i$ points from $P$, it computes a uniform sample $R \cap (\sum_{j < i} n_i, \sum_{j \leq i} n_i]$ many points. The union of those sample is uniform.}
\end{proof}

\section{Extension to $k$-Means}\label{sec:kmeans}
The main focus in the paper is on k-median, however we can also show an extension of our result for $k$-means: 
\begin{restatable}{theorem}{km}
\label{thm:km}
There exists an $\eps$-DP algorithm $A$ that takes as input
  a set of points and computes a solution for $k$-means with at most $(1+\alpha)k$ centers and, with probability $3/4$, costs at most $\poly(d, \log n, 1/\alpha) \cdot \opt + k d^2 \log^2 n / \eps \cdot \Lambda^2$.
\end{restatable}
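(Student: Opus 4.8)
The plan is to run a squared-distance variant of the quadtree dynamic program of \cref{alg:dynKMed} on a privatized quadtree, and to compensate for the fact that a randomly shifted quadtree does \emph{not} preserve squared distances by (i) permitting $(1+\alpha)k$ centers and (ii) recursively peeling off the few points whose distance to their serving center is badly distorted, following the ``most distances are preserved'' idea of \cite{CohenAddadFS19}. Concretely, I would build a shifted quadtree $T$, call \texttt{MakePrivate}$(T,P)$ to obtain private cell counts (so that privacy is controlled exactly as in \cref{lem:makePrivateDP}, after splitting $\eps$ across all the steps below), and run a DP in which each cell $c$ may either declare itself a single cluster at cost $\diam(c)^2\cdot w(c)$ or split its budget among its two children; this is the $z=2$ analogue of \cref{alg:dynKMed}, and the additive error it incurs --- $\diam(c)^2$ times the Laplace noise, summed over the $O(d\log n)$ levels and over $k$ centers --- is exactly the claimed $k d^2\log^2 n/\eps\cdot\Lambda^2$.

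The heart of the argument is the \emph{badly-cut} analysis. Fix a reference solution $S$ with clusters $\{C_s\}_{s\in S}$; for each $s$ set $\Delta_s=\cost(C_s,S)/|C_s|$ (its average squared radius) and $r_s=\poly(d,\log n,1/\alpha)\cdot\sqrt{\Delta_s}$, and call $s$ \emph{badly cut} if the ball $B(s,r_s)$ is cut in $T$ at a level $i$ with $2^i$ much larger than $r_s$. By \cref{lem:ballCut} this occurs with probability $O(\alpha/\poly(d,\log n))$, so by Markov at most $\alpha|S|$ centers of $S$ are badly cut with probability at least $7/8$. When $s$ is \emph{not} badly cut, every point of $C_s$ within distance $r_s$ of $s$ (its ``core'') lies in a single cell of $T$ of diameter $O(r_s)$, hence has squared tree-distance $O(r_s^2)=\poly(d,\log n,1/\alpha)\cdot\Delta_s$ to the center of that cell, which summed over the core is $\poly(d,\log n,1/\alpha)\cdot\cost(C_s,S)$; by Markov the remaining ``outliers'' of $C_s$ at distance more than $r_s$ from $s$ number at most $\alpha|C_s|/\poly(d,\log n)$, so over all clusters the outliers form a set of size at most a $1/\poly(d,\log n)$ fraction of $|P|$. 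Running this with $S$ a differentially private $1$-means solution of the current point set --- which, although a weak global approximation, correctly identifies the \emph{scale} of the bulk of the points, so that the DP knows at which granularity to stop recursing in each region --- the DP refines that single cluster's core at the claimed multiplicative cost and pushes the outliers into a residual set that is a $1/\poly(d,\log n)$ fraction of the input; iterating on the residual set (which therefore empties after $O(\log n)$ rounds) and combining the per-round clusterings with the badly-cut centers, which are Euclidean points kept verbatim, yields a solution with at most $(1+O(\alpha))k$ centers whose cost, charged \emph{against $\opt$ directly}, is $\poly(d,\log n,1/\alpha)\cdot\opt$ after rescaling $\alpha$.

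Finally, the DP outputs quadtree cells rather than points of $\R^d$, so, as in \cref{sec:kmedMPC}, I would \emph{project back}: in each cluster produced, compute a differentially private $1$-mean in the original space (the $k$-means counterpart of \cref{alg:coreset}, using the averaging mechanism of \cite{badih_approximation} in the spirit of \cref{lem:DP1median}, with additive error $O(d\Lambda^2/\eps\cdot\polylog)$), which over $k$ clusters contributes only the lower-order term $O(kd\Lambda^2/\eps\cdot\polylog)$; privacy then follows by composition over the quadtree-counting step, the $O(\log n)$ recursive DP calls, and the project-back calls, each run on the raw data with a $1/\poly$ share of the budget, and the failure probability is kept below $1/4$ by a union bound over the badly-cut event and the project-back calls (taking $\beta=\Theta(1/k)$ in \cref{lem:DP1median}). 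The main obstacle I expect is making steps (i) and (ii) coexist: one must apportion the center budget across the $O(\log n)$ rounds so that both the badly-cut overhead and the centers genuinely needed to reach $k$ clusters fit within $(1+O(\alpha))k$, \emph{and} show that the multiplicative cost charged at each round is bounded against $\opt$ restricted to the current residual set rather than against the (possibly enormous) cost of the intermediate $1$-means reference --- which is precisely where only the \emph{scale} identified by that reference, and not its global quality, may be used.
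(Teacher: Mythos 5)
Your toolkit (shifted quadtree, \texttt{MakePrivate}, a $z=2$ dynamic program, a badly-cut analysis via \cref{lem:ballCut}, and $O(\log n)$ rounds of boosting from a weak reference) matches the paper's, but the recursion you set up does not close, and you flag the gap yourself without resolving it.

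The paper's recursion is over the \emph{reference solution}, not over a shrinking residual point set. Lemma~\ref{lem:main:kmeans} shows that given any solution $L$ one can privately compute a new solution with at most $k + \frac{\alpha}{2}|L|$ centers and cost at most $\alpha\cdot\cost(L) + \poly(d,\log n,1/\alpha)\cdot\opt + kd^2\log^2 n/\eps\cdot\Lambda^2$; the decisive feature is the factor $\alpha<1$ in front of $\cost(L)$. The mechanism is: delete from $P$ the clusters of badly-cut centers of $L$ (whose $L$-cost is at most $\frac{\alpha}{2}\cost(L)$ by \cref{lem:boundBC}), keep those $\le\frac{\alpha}{2}|L|$ centers verbatim, and on the remaining points bound $\cost_\calD(P_\calD,\opt)$ by routing each badly-cut client through its non-badly-cut center $L(p)$ as a waypoint (\cref{lem:goodInstance}), which gives $\frac{\alpha}{2}\cost(L) + \poly\cdot\opt$. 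Iterating from $L_0=\{0\}$ with $\cost(L_0)\le n\Lambda^2$, the recurrence $\cost(L_{i+1})\le\alpha\cost(L_i)+\poly\cdot\opt+\text{noise}$ drives the $n\Lambda^2$ contribution to negligible size in $\log n$ rounds, while $|L_i|\le k\sum_{j\geq 0}(\alpha/2)^j\le(1+\alpha)k$ stays bounded because each round's output \emph{replaces} (not augments) the previous $L$. Your residual-peeling scheme has no analogous contraction: the bound on a core is $\poly\cdot\cost(C_s,S)$ with $S$ a $1$-means solution, and removing a $1/\poly$ fraction of points as outliers does not make the $1$-means cost of what remains shrink by a factor $\alpha$, so nothing suppresses the potentially $\Theta(n\Lambda^2)$-sized term --- which is exactly the obstacle you name in your final sentence. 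Separately, spending $k$ fresh centers per round and unioning them over $O(\log n)$ rounds gives $O(k\log n)$ centers, not $(1+\alpha)k$; and the paper does not project back to $\R^d$ inside Theorem~\ref{thm:km} at all (the reduction to $k$ genuine Euclidean centers is a separate reverse-greedy step via Lemmas~\ref{lem:reversePrivate}--\ref{lem:reverseApprox}).
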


We give an in-depth description with full proofs in \cref{ap:kmeans}. Here, we outline the high-level ideas, where we show as well how to remove the extra $\alpha k$ centers, to get an approximate solution with exactly $k$ centers. 
As explained in the introduction, we establish the following lemma, that shows how we can improve a solution given as input.

\begin{restatable}{lemma}{mainkmeans}
\label{lem:main:kmeans}
  Given an arbitrary solution $L$, there exist an $\eps$-DP algorithm $A$ that takes as input a set of points and computes a solution for $k$-means with at most $k + \frac{\alpha}{2}\cdot|L|$ centers and, with probability $1-\pi$, costs at most $\frac{O(d^9 \log^2 n)}{\alpha^6 \pi^6} \cdot \opt + \alpha \cdot \cost(L) + k d^2 \log^2 n / \eps \cdot \Lambda^2$.
\end{restatable}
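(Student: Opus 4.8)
The plan is to import the ``badly cut'' machinery of \cite{CohenAddadFS19} so that the randomly-shifted quadtree of Section~\ref{sec:prelim} becomes usable for squared distances. A single tree embedding is worthless for $k$-means in the worst case, because $\E_\calD[\dist_\calD(p,q)^2]$ can be as large as $\Theta(d\cdot\dist(p,q)\cdot\Lambda)$ (a heavy tail coming from the small chance that a short pair is separated very high up). The idea is to identify, and discard to a cheap side solution, exactly the parts of the instance where this happens; what remains embeds with only a $\poly(d,\log n)$ worst-case distortion.

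Concretely, build a quadtree $T$ with deepest cells of diameter $\approx\Lambda/(n\cdot\poly)$ and fix a threshold $\tau=\poly(d,\log n,1/\alpha,1/\pi)$. Call a center $\ell\in L$ \emph{badly cut} if for some dyadic scale $r$ among the $O(\log n)$ scales between $\Lambda/n$ and $\Lambda$, the ball $B(\ell,r)$ is separated at a level whose cell diameter exceeds $\tau r$; this depends only on the shift and on $\ell$, hence costs no privacy. By Lemma~\ref{lem:ballCut} and a union bound over scales, $\Pr[\ell\text{ badly cut}]=O(d^{3/2}\log n/\tau)$, so for $\tau$ large, $\E[\#\{\ell\text{ badly cut}\}]\le\frac\alpha8|L|$ and (uniformity of the bound) $\E[\sum_{\ell\text{ badly cut}}\cost(C_\ell,\ell)]\le\frac\alpha8\cost(L)$, where $C_\ell$ is $\ell$'s cluster in $L$; Markov on both gives, with probability $\ge1-\pi/4$, that $\#\{\ell\text{ badly cut}\}\le\frac\alpha2|L|$ and $\sum_{\ell\text{ badly cut}}\cost(C_\ell,\ell)\le\frac\alpha4\cost(L)$. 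The output of $A$ is the union of (a) the badly-cut centers of $L$, kept verbatim, and (b) the result of the squared-distance analogue of Algorithm~\ref{alg:dynKMed} (same dynamic program with $\diam(c)$ replaced by $\diam(c)^2$ and $\laplacenoise(d\log n/\eps)$ added to each cell count) run on the restriction $P'$ of the input to the clusters of the \emph{non}-badly-cut centers. This has at most $k+\frac\alpha2|L|$ centers; it is $\eps$-DP because the classification is post-processing of $L$ (it never reads $P$), the tree DP is $\eps$-DP by the composition argument of Lemmas~\ref{lem:dynKMed} and \ref{lem:makePrivateDP}, and those two components touch disjoint information; and the $\laplacenoise$ noise contributes the additive $\frac{kd^2\log^2 n}{\eps}\Lambda^2$ exactly as in the $k$-median analysis.

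For the cost: the points in badly-cut clusters are served by their kept $L$-center at total cost $\le\frac\alpha4\cost(L)$. On $P'$, the DP returns a $k$-center solution of tree-cost at most that of \emph{any} $k$-center tree solution on $P'$, and its Euclidean cost on $P'$ is at most its tree-cost since $\dist\le\dist_\calD$; so it suffices to exhibit one cheap tree solution on $P'$. Let $o_p$ be an optimal center serving $p$. For $p$ whose optimal edge $(p,o_p)$ is \emph{not} separated above scale $\tau\dist(p,o_p)$, routing $p$ to $o_p$'s cell costs $\dist_\calD(p,o_p)^2\le O(\tau^2)\dist(p,o_p)^2$, summing to $O(\tau^2)\opt$. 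For the rest, using that $\ell_p$ is not badly cut (hence $\dist_\calD(p,\ell_p)\le O(\tau)\dist(p,\ell_p)$), we route $p$ to $\ell_p$'s cell at cost $O(\tau^2)\dist(p,\ell_p)^2$; by Lemma~\ref{lem:ballCut} this residual mass has expectation $O(\tau^2)\cdot O(d^{3/2}\log n/\tau)\cdot\cost(L)=O(\tau d^{3/2}\log n)\cost(L)$ over the shift. The coefficient here, $O(\tau d^{3/2}\log n)$, is far larger than $\alpha$, and driving it down is the crux. One way: reclassify at a geometric hierarchy of thresholds, \emph{also} mark as bad the points that $L$ grossly overpays ($\dist(p,\ell_p)\gg\dist(p,o_p)$), keep a non-badly-cut center $\ell$ as an \emph{auxiliary} center whenever the cost of its overpaid-or-optimally-badly-cut points exceeds $\frac1{|L|}\cost(L)$ (at most $\frac\alpha4|L|$ centers, so the total stays $k+\frac\alpha2|L|$), route those points to the nearby kept $\ell$, and for the leftover mass balance $\tau$ against the $O(d^{3/2}\log n/\tau)$ badly-cut probability so their product is $O(\alpha)$. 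Summing the buckets yields a tree solution of cost $O(\tau^2)\opt+O(\alpha)\cost(L)$; with $\tau=\poly(d,\log n,1/\alpha,1/\pi)$, rescaling $\alpha$, and union-bounding the $O(1)$ failure events, this gives the stated $\frac{O(d^9\log^2 n)}{\alpha^6\pi^6}\opt+\alpha\cost(L)+\frac{kd^2\log^2 n}{\eps}\Lambda^2$ with probability $1-\pi$.

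The hard part I expect is exactly this last step: the tree distortion on a badly-cut edge is $\tau$, polynomially large, while the fraction of the cost carried by badly-cut edges is $\Theta(d^{3/2}\log n/\tau)$, and these must multiply to at most $\alpha$ even though $\tau$ already has to be $\poly(d,\log n,1/\alpha,1/\pi)$ for the badly-cut set to be small --- so a single threshold cannot work, and one needs the hierarchy of thresholds together with a small budget of auxiliary centers for the most wasteful clusters. Everything else --- privacy by composition, the additive $\laplacenoise$ term, the near-linear running time of the $k$-means DP (identical in structure to Lemma~\ref{lem:dynKMed}), and iterating the lemma starting from a DP $1$-mean in Theorem~\ref{thm:km} --- is routine and parallels the $k$-median development already in the paper.
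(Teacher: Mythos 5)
Your framework is the right one, and the setup — use the $L$-dependent badly-cut classification, keep the badly-cut $L$-centers verbatim, run a squared-distance version of the quadtree DP on the remaining clusters, charge the Laplace noise to the additive term, get privacy by post-processing plus composition — matches the paper. But the central analytic step is missing, and the fix you propose (a hierarchy of thresholds plus auxiliary centers) is not what the paper does, nor does it clearly close the gap.

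The issue is precisely where you route a badly-cut point $p$ (with $\ell_p = L(p)$ non-badly-cut) to $\ell_p$'s cell: $\ell_p$ is not one of the $k$ centers in the witness tree solution, so this does not exhibit a valid $k$-center tree solution, and the resulting $O(\tau d^{3/2}\log n)\cost(L)$ coefficient is indeed too large. The paper's trick is different: for such $p$ it \emph{still routes to $\opt(p)$} and uses the non-badly-cut $L(p)$ only as a geometric \emph{certificate} that $\dist_\calD(p,\opt(p))$ is controlled. Concretely, both $p$ and $\opt(p)$ lie in $B(L(p),\,\dist(p,L)+\dist(p,\opt))$ (triangle inequality), and since $L(p)$ is non-badly-cut, that ball is cut at level $\le i+\log(d\log n/\badcutF)$ with $2^i \approx \dist(p,L)+\dist(p,\opt)$, whence $\dist_\calD(p,\opt)\le \frac{2d^{3/2}\log n}{\badcutF}(\dist(p,L)+\dist(p,\opt))$. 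Squaring gives $\cost_\calD(p,\opt)\lesssim \frac{d^3\log^2 n}{\badcutF^2}(\cost(p,L)+\cost(p,\opt))$. This is then summed only over the badly-cut \emph{clients}, which is where the second ingredient enters that you conflate: the paper uses \emph{two independently tuned parameters}, not a hierarchy — a tiny $\badcut=\Theta(\alpha^3\pi^3/(d^3\log^2 n))$ for clients (so by Markov $\sum_{p\text{ badly-cut}}\cost(p,L)\lesssim (\badcut/\pi)\cost(L)$, and the distortion on the \emph{non}-badly-cut clients' $\opt$-edges is $d^3/\badcut^2 = O(d^9\log^4 n/(\alpha^6\pi^6))$), and a moderate $\badcutF=\Theta(\pi\alpha)$ for centers. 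The product $\frac{d^3\log^2 n}{\badcutF^2}\cdot\frac{\badcut}{\pi}$ is then $O(\alpha)$ exactly because the two thresholds are decoupled: $\badcut$ can be driven polynomially small without affecting the center-distortion $1/\badcutF$. Your single-$\tau$ version cannot decouple these roles, and the ``grossly overpaid points'' / ``auxiliary centers'' workaround you sketch is neither needed nor developed to the point where the stated $k+\frac{\alpha}{2}|L|$ center budget and the $\alpha\cost(L)$ term are verified.
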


Although the quadtree decomposition approximates distances well in expectation, it works poorly for squared distances. Indeed, two points $p, q$ have probability $\frac{d \cdot \dist(p, q)}{2^i}$ to be cut at level $i$: hence, the expected distance squared between $p$ and $q$ is 
$d \cdot \dist(p, q) \cdot \sum_i \sqrt d 2^i$, which means that the distance squared can be distorted by an arbitrarily large factor in expectation.

 However, observe that $p$ and $q$ have tiny probability to be cut at a level way higher than $\log (d\cdot \dist(p, q))$. Hence, there is a tiny probability that points are cut from their optimal center at a high-level. The question is then: what to do when this happens? Here we want to avoid routing in the tree since the squared distance could be arbitrarily large and we may want to deal with such points in a different way.
 To do so, we use a baseline solution $L$ to guide our decisions on points for which the tree distance to their closest center in the optimum solution badly approximates the true distance, let call them \emph{bad points}.
Since we don't know the optimum solution, we don't know the bad points and so we will use $L$ as a proxy for finding the potential bad points.

We show that the solution computed by our algorithm is good w.r.t. to a solution that contains all facilities of $L$  for which the quadtree distances are not a good approximation of the true distances. We call those facilities \emph{badly-cut}. To bound the cost of a client $c$, we distinguish three cases. Either the distance from a point to the optimal center is good in the tree, and we are happy because we can serve it nicely in the tree. Or its closest center of $L$ is not badly-cut, in which case we argue that the distance to the optimal center cannot be too high compared to its optimal cost. In the last case, where the closest center of $L$ is badly-cut, we simply assign the point to $L$ since we are working with a solution containing all centers of $L$. This happens with some tiny probability, and will not be too costly overall, i.e.: only a tiny fraction of the cost of $L$.
\

\section{Empirical Evaluation}
\label{sec:exp}

\begin{figure}[t]
\centering 
\begin{tabular}{cc}
\includegraphics[width=.44\linewidth]{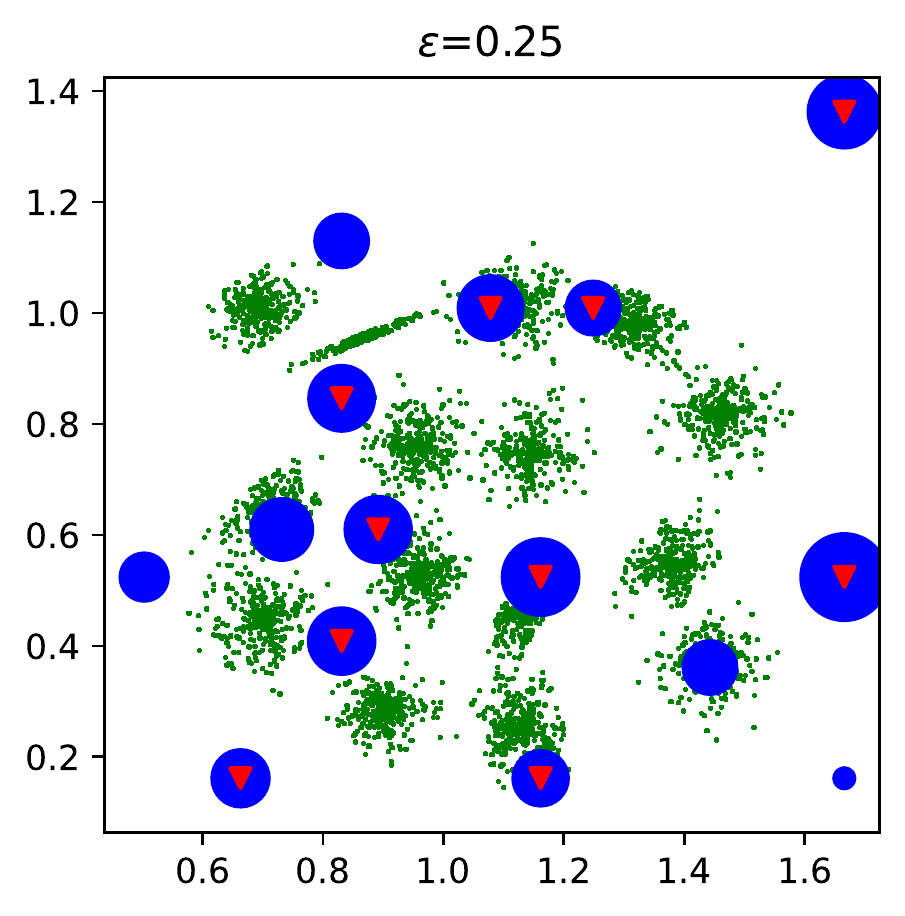}
&\includegraphics[width=.44 \linewidth]{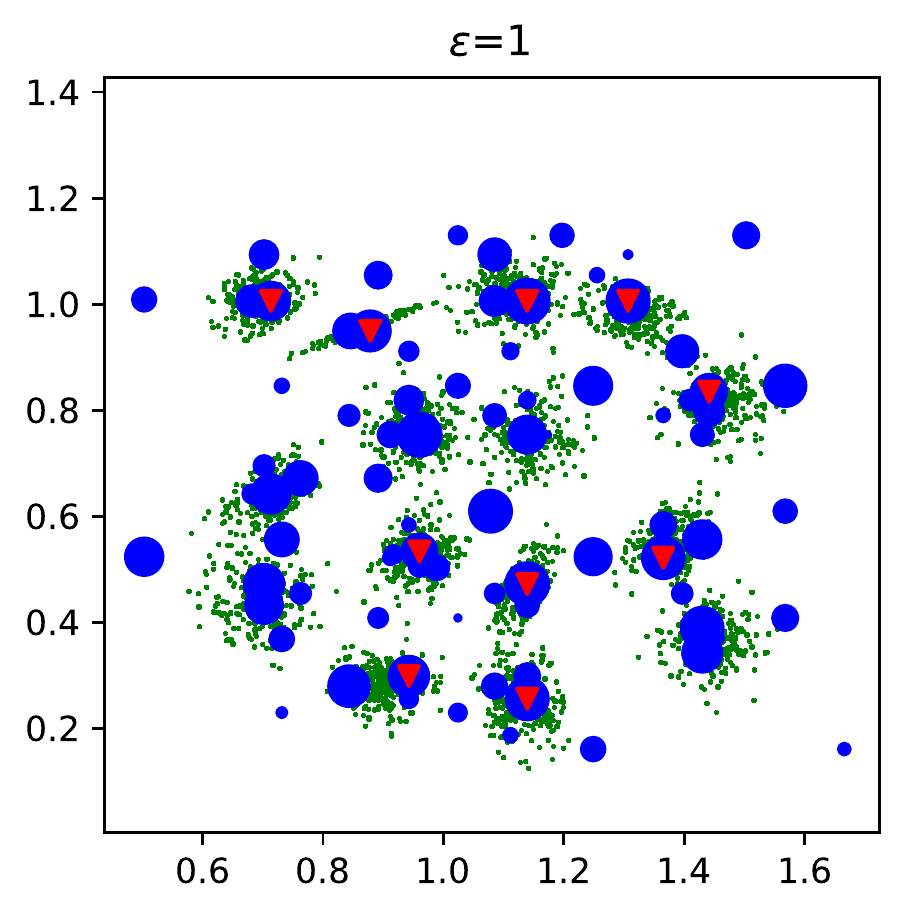}\\
\vspace{-0.1in}
(a) & (b)  
\end{tabular}
\caption{Visualization of our algorithm. Original dataset in green. Leaves of the tree scaled by their weight in blue and centers found by our algorithm in red. (a) $\epsilon = 0.25$ and (b) $\epsilon = 1.0$.}
\label{fig:visualization}
\end{figure}

\begin{figure}[t!]
\centering 
\begin{tabular}{cc}
\includegraphics[width=.44 \linewidth]{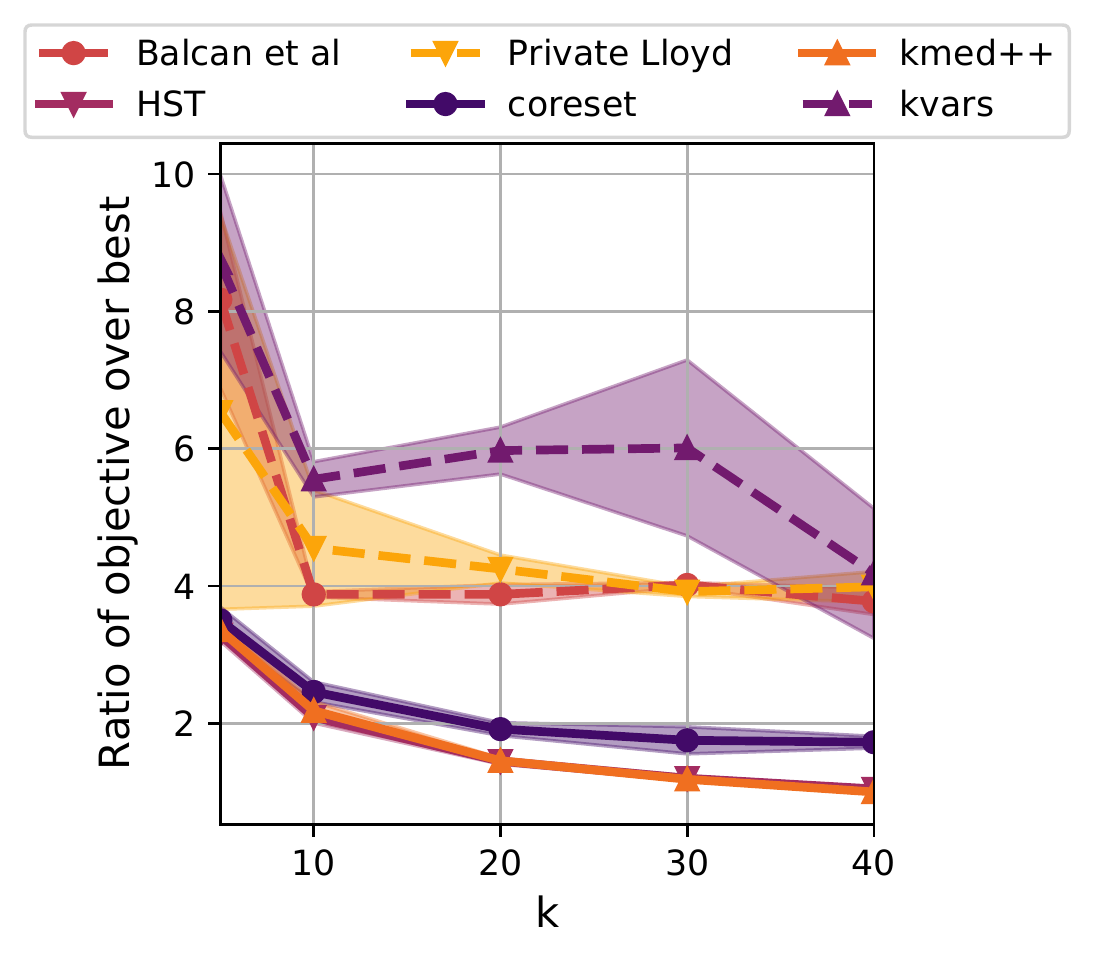} 
&\includegraphics[width=.44 \linewidth]{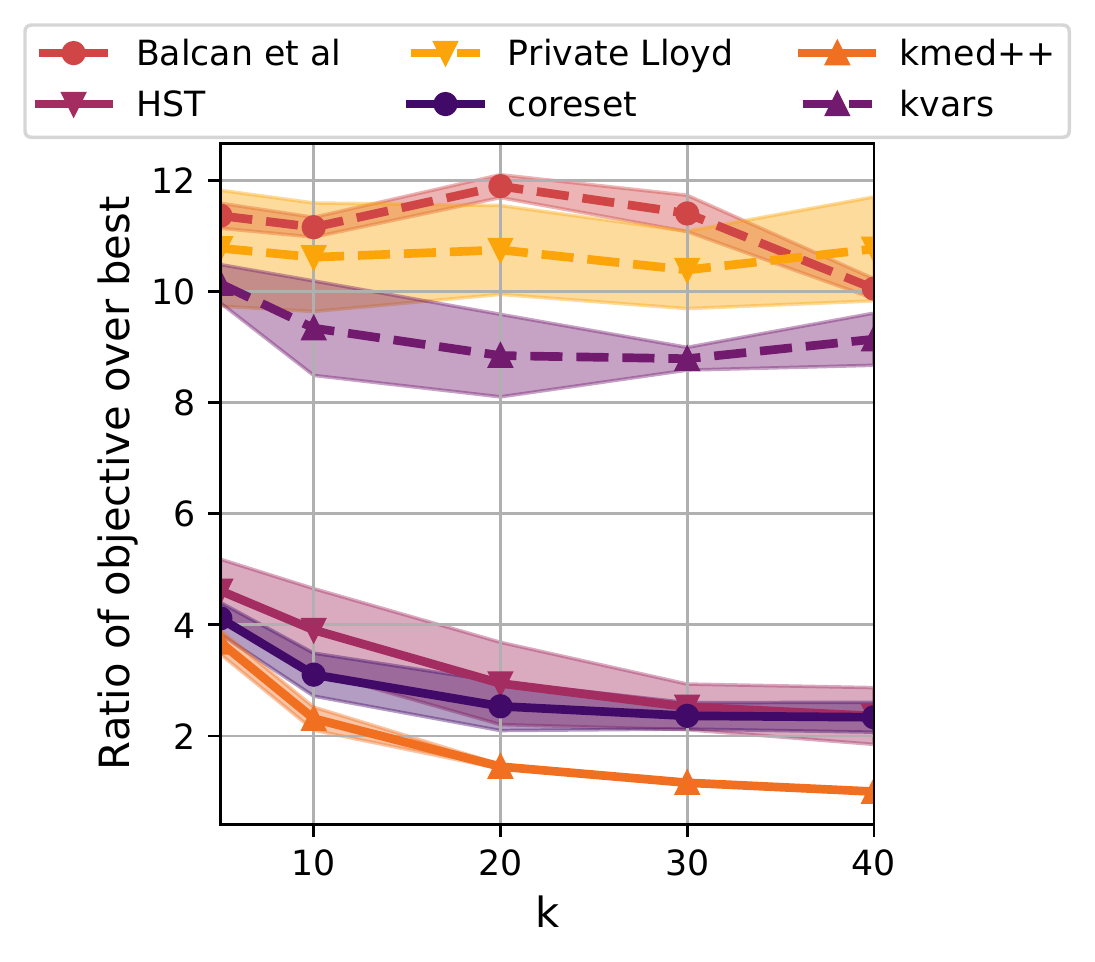}\\
\vspace{-0.1in}
(a) & (b) 
\end{tabular}
\caption{Comparison of algorithms on (a) \skin and (b) \shuttle}
\label{fig:smallscale}
\end{figure}

In this section, we present an empirical evaluation of our algorithm for the $k$-median objective. To the best of our knowledge, this is the first comprehensive empirical evaluation of private k-median algorithms as the majority of experimental results has previously focused on $k$-means. All datasets used here are  publicly-available, and the code accompanying our paper can be found at this page: \url{https://github.com/google-research/google-research/tree/master/hst_clustering}

\textbf{Datasets.} 
We used the following well known, real-world datasets from the UCI Repository~\cite{Dua:2019} that are standard in clustering experiments \skin~\cite{skindataset} ($n=245057, d=4$),  \shuttle~\cite{Dua:2019} ($n=58000, d=9$), \covtype~\cite{blackard1999comparative} ($n=581012, d=54$)
and \higgs~\cite{higgs} ($n=11000000$, $d=28$). Finally, we use a  publicly available synthetic datasets \synth ($n=5000$, $d=2$)~\cite{ClusteringDatasets} for visualizing clustering results.

\textbf{Experimental details.}
To simplify the stopping condition of Algorithm~\ref{alg:makePrivate} we parameterize our algorithm by a depth parameter $\alpha$ and weight parameter $\beta$. We grow all of our trees to a max depth of $\alpha d$ and stop splitting the tree when $w(c) < \frac{10 \beta d}{\epsilon}$ instead of $2 d \log n/\epsilon$. This threshold was chosen to decrease the chance of potentially splitting empty cells multiple times and does not affect the privacy properties of the mechanism. The implementation for building the tree embedding was done using C++ in a large-scale distributed infrastructure. The dynamic program for solving the optimization problem in the tree was done in a single machine. 

\paragraph{Non-private baseline}
We compare the results of our algorithm against a non-private implementation of $k$-median++~\cite{arthur2007k} (\textbf{kmed++}) with 10 iterations of Lloyd's algorithm. Each iteration was done by optimizing the $k$-median objective exactly using Python's BFGS optimizer. 

\paragraph{Private baselines}
To the best of our knowledge all private baselines for clustering algorithms have focused on the k-means problem. However, using a private $1$-median algorithm it is possible to adapt some of the prior work to solve the private k-median problem.

As a first step we implement a subroutine of the $1$-median problem using the objective perturbation framework of \cite{TPDPCO}. The algorithm described in \cite{TPDPCO} requires a smooth loss function. We therefore modified the k-median objective to the  $\frac{1}{\lambda}$-smooth k-median objective $f_\lambda \colon x \mapsto \|x\| + 2 \lambda \log \big((1 + e^{-\frac{\|x\|}{\lambda}})/2\big)$ which converges to $\|x\|$ as $\lambda \mapsto 0$. 
Given this tool, we implemented the following algorithms.

    $\bullet$ {\bf HST }: The MPC version of Algorithm~\ref{alg:kmed}. After finding the centers using the tree, we ran 4 iterations of the Lloyd algorithm using the private $1$-median implementation described above using at most 20k points from each cluster for the optimization step to allow it to fit in memory. We split the privacy budget $\epsilon$ uniformly: using $\epsilon/5$ to build the tree and $\epsilon/5$ per Lloyd's iteration. We tune the parameters $\alpha \in \{10, 12, 14\}$ and $\beta \in \{6, 8, 10\}$. The hyper-parameters for the $1$-median solver were set to $\lambda = 0.2$ and $\gamma$ to $0.01*\sqrt{d}/n$ ($\gamma$ is a bound on the gradient norm of the optimizer defined in \cite{TPDPCO}).

$\bullet$ \textbf{Private Lloyd}: a private implementation of Lloyd's algorithm. This algorithm has no approximation guarantee. The initial centers are chosen randomly in the space, and at each iteration, each point is assigned to the nearest center, and centers are recomputed using the private 1-median algorithm. We chose the number of iteration to be 7, as a tradeoff between the quality of approximation found and the privacy noise added. Here, the hyper-parameters for the $1$-median solver were $\lambda = 1$ and $\gamma = 0.01 \sqrt{d}/n$.

$\bullet$ \textbf{Balcan et al}: the private algorithm of~\cite{balcan}. The solution computed has a worst case cost of at most $\log(n)^{3/2} \cdot \opt + \poly(d, k, \log n)$. We modified the code available online~\cite{balcanCode} to adapt it to $k$-median, by using our $1$-median implementation with $\lambda = 1$ and $\gamma = 0.01 \sqrt{d}/n$.

$\bullet$ \textbf{kvars}: A private instantiation of the kvariates heuristic algorithm of~\cite{kvariates}. The algorithm uses a sub-routine that splits data into computation nodes. We hash each point using  SimHash \cite{simhash} to assign them to one of 500 computation nodes.
 
$\bullet$  \textbf{Coreset}\footnote{\url{https://ai.googleblog.com/2021/10/practical-differentially-private.html}}: A  heuristic algorithm for private k-means clustering that creates a coreset via  recursive partitioning using locality sensitive hashing. We modified the heuristic to handle k-median with our private 1-median implementation, with $\lambda = 0.2$ and $\gamma=0.01\sqrt{d}/n$.

\paragraph{Other baselines not evaluated}
We describe here other potential candidate baselines which  we found not feasible to compare against. 
Since our work focuses on scalability, we do not compare against algorithms with impractically large running times like the algorithm of ~\cite{StemmerK18, badih_approximation} which have state-of-the-art theoretical approximations but that have not previously been implemented.\footnote{Private communication with the authors of \cite{badih_approximation} confirmed that there is no practical implementation of this algorithm available.}
We also did not compare against \cite{anamayclustering}, as it lacks guarantees for $k$-median and the baseline \cite{balcan} showed comparable performance with their algorithm. Finally, we do not compare with the heuristic GUPT \cite{gupt} as it does not provide an explicit aggregation procedure for k-median.

For all algorithms we report the average of $10$ runs. We varied the number of centers, $k$, from $5$ to $40$ and,  $\epsilon$,  from $0.25$ to $1$.

\paragraph{Results} We begin by showing a visualization of our algorithm on the \synth  dataset of 2 dimensions to give intuition on the effect of privacy on constructing the tree embedding. Figure~\ref{fig:visualization}(a) shows the centers returned by our algorithm for $\epsilon=0.25$ and $\epsilon=1$. It is immediate to see that as $\epsilon$ increases our tree embedding captures the geometry of the dataset correctly.

\begin{figure}[t] 
\centering
\begin{tabular}{cc}
\includegraphics[width=0.44\linewidth]{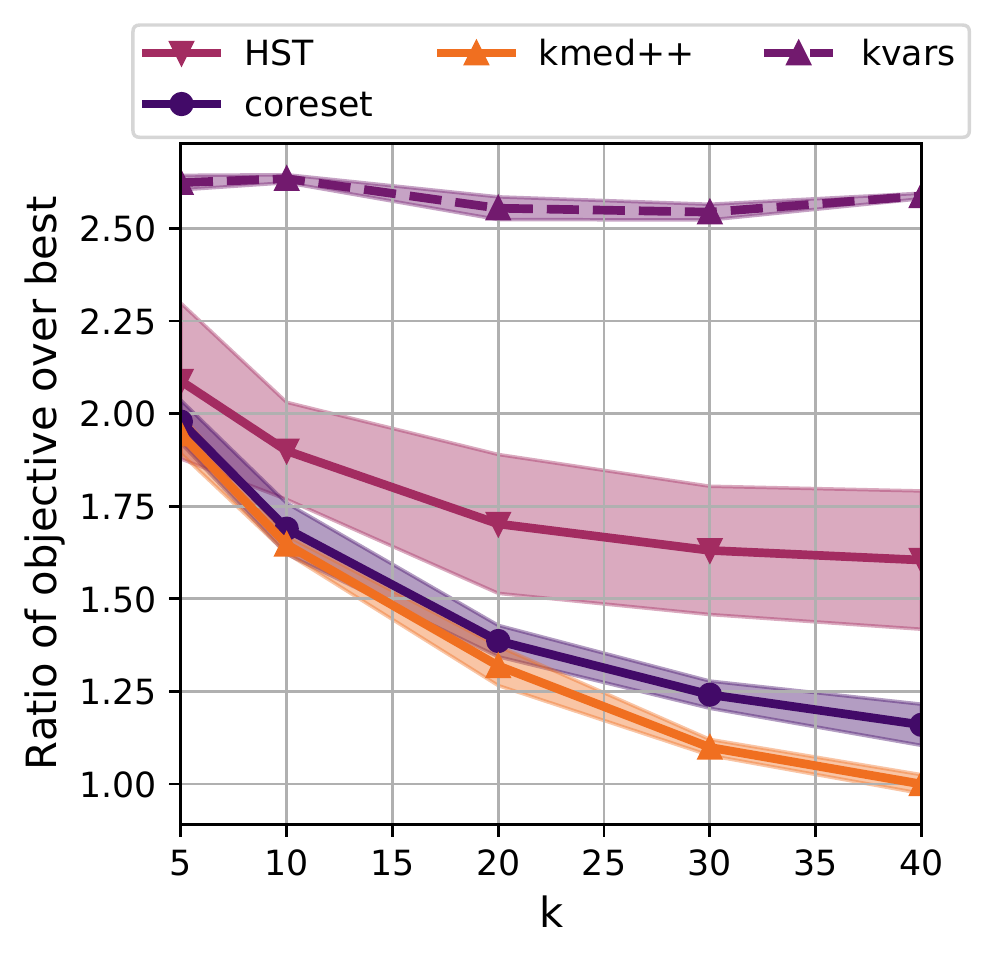} & 
\includegraphics[width=0.44\linewidth]{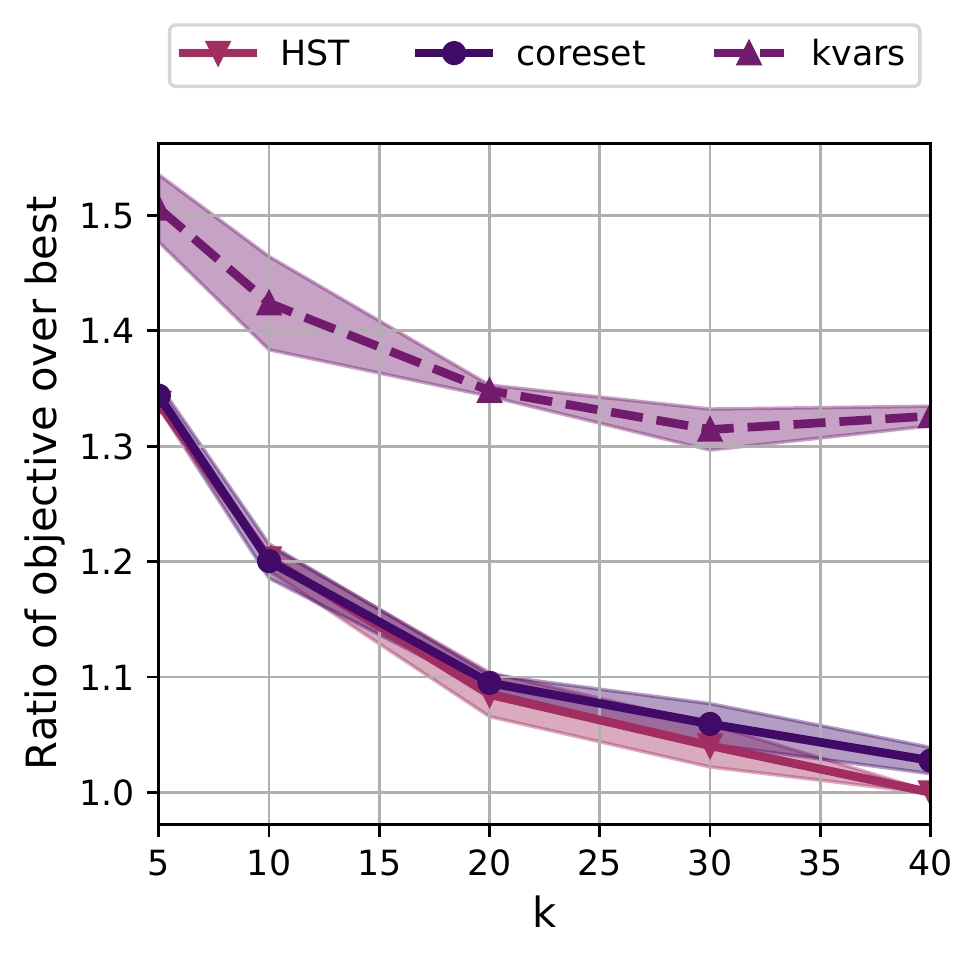}\\
\vspace{-0.1in}
(a) & (b)
\end{tabular}
\caption{Objective function as a function of $k$ datasets (a) \covtype and (b) \higgs.}
\label{fig:largescale}
\end{figure}

We now discuss the quality of the clusterings returned by each algorithm. We begin evaluating all baselines on the small datasets \skin and \shuttle. Figure~\ref{fig:smallscale} shows the quality of each algorithm for $\epsilon=0.5$. The plots are normalized by the best clustering objective. There are several points worth noting in this plot. First, the performance of the Balcan et al. algorithm which has the best approximation guarantees is consistently outperformed by our algorithm and the coreset algorithm. Second, notice that on \skin our algorithm achieves a performance that is essentially the same as the non-private baseline. 

For the large datasets \covtype and \higgs, it was impossible for us to run the Balcan et al.~approach. Therefore, we only compare our algorithm against the coreset and kvars baselines. Figure~\ref{fig:largescale} shows the results. Here we see that  our algorithm has the strongest performances on HIGGS while on \covtype it is comparable to the coreset heuristic and slightly worse for large $k$. 

We compare only the quality of the solutions computed and not the running time, as the parallel implementation has a large overhead and it never runs really fast. However, our implementation does run and provide apparently good results on large scale datasets on which other private algorithms do not terminate or give really poor results -- to the notable exception of the Coreset algorithm, which does not enjoy theoretical guarantees.

In summary, our empirical evaluation confirms that our approach, which is the only method that has both theoretical performance guarantees and can be made to scale to large datasets consistently performs well on a wide variety of examples, achieving accuracy much higher than the worst case analysis would indicate.

\section{Conclusion}
We present practical and scalable differentially private algorithms for $k$-median with worst case approximation guarantees. Although their worst-case performance is worse than state of the art methods, they are parallelizable, easy to implement in distributed settings, and empirically perform better than any other algorithm with approximation guarantees.
Furthermore, we present an extension of those algorithms to the $k$-means objective, with a theoretical analysis.
A natural open question is to  close this gap between theory and practice: finding scalable methods that have even better worst-case guarantees.

\bibliographystyle{plain}
\bibliography{references}

\newpage
\appendix

\section{Supplementary Material -- Extension to $k$-Means}\label{ap:kmeans}

In this section we prove \cref{thm:km}. 
\begin{proof}[Proof of \cref{thm:km}]
We consider the following algorithm: 

\begin{algorithm}
\caption{$k$-means algorithm with extra centers}
\begin{algorithmic}[1]
\State \textbf{Input:} A set of clients $X$.\textbf{Output:} A set of $k$-means centers $C$.
\State $L \gets {0}$, $\eps' \gets \eps/\log n$ 
\For{$\log n$ steps} 
 \State $L' \gets$ Solution computed by $A$ as described by Lemma~\ref{lem:main:kmeans}, setting $\pi = \frac{1}{4\log n}$, with privacy parameter $\eps'$.\footnote{Instead, one could apply $\log \log n$ times the algorithm of Lemma~\ref{lem:main:kmeans} with a constant probability $\pi$ and take the outcome of the best run. This slightly changes the parameters -- it saves a few $\log n$ in the approximation -- but for the proof we opted for simplicity rather than performances.} 
    \State $L \gets L'$
\EndFor
\State Return $L$
\end{algorithmic}
\end{algorithm}

We argue that the above algorithm produces a solution satisfying the claims
of Theorem~\ref{thm:km}. We have that the initial solution has cost at most $n \Lambda^2$.
Then, by repeatedly applying Lemma~\ref{lem:main:kmeans}, we obtain 
solutions of geometrically decreasing cost. 
More precisely, after $i$ iterations, we claim that with probability $1 - \frac{i}{4 \log n}$, $L$ has size at most $k \cdot \sum_{j = 0}^i \alpha^i$, and  the cost of $L$ is at most  $\poly(d, \log n, 1/\alpha)\cdot \opt + \alpha^i n \Lambda + kd^2 \log^2 n /\eps' \cdot \Lambda^2$ . This is true when $i = 0$, and follows directly from applying Lemma~\ref{lem:main:kmeans}.

It follows that the final solution computed after $\log n$ steps 
has cost at most $\poly(d, \log n)$ times $\opt$ plus additive
$kd^2 \log^3 n /\eps \cdot \Lambda^2$. 
Moreover, since $\eps' = \eps / \log n$, the algorithm is by composition $\eps$-DP.
Finally, the number of centers is at most 
$k(1+\alpha)$ as desired.
\end{proof}

Hence, the key is to prove \cref{lem:main:kmeans}. Before describing the ideas behind the extension to $k$-means, we introduce some notations.
We say that two points $p, q$ are \emph{cut at level $i$} when their lowest common ancestor
in the tree is at level in $(d\cdot (i-1), d \cdot i]$, i.e., the diameter of that common ancestor is in $(\sqrt d \cdot 2^{i-1}, \sqrt{d} \cdot 2^i]$. In that case, the 
distance in the quadtree metric between $p$ and $q$ is at most $\sqrt d 2^i$.
We say that a ball $B(p, r)$ is cut at level $i$ if $i$ is the largest integer such that
there exists a point $q$ with $\dist(p, q) \leq r$ and $p$ and $q$ are cut at level $i$.

Recall \cref{lem:ballCut}: 
  For any $i$, radius $r$ and point $p$, it holds that $\Pr[B(p, r) \text{ is cut at level } i] = O\left(\frac{d r}{2^i}\right)$.

\paragraph{Formalization} Let $P \subseteq \R^d$ be an instance of the 
$k$-means problem in $\R^d$.
Let $\globalS$ be an optimal solution to $P$ and $L$ be an arbitrary solution. 
For a given client $c$, we let
$L(c)$ (resp. $\globalS(c)$) denote the center of $L$ (resp. $\globalS$) 
that is the closest to $c$ in solutions
$L$ (resp. $\globalS$).

For a quadtree decomposition $\calD$, we say that a client $c$ is \emph{badly-cut} if the ball $B(c, \dist(c,$ $\opt))$ is cut at a level higher than $\log (\dist(c,$ $\opt) \cdot d/\badcut)$ -- not that this is for the analysis only, since we don't know this algorithmically.
We say that a center $f \in L$ is \emph{badly-cut} if for some $i$, the ball $B(f, 2^i)$ is cut at a level higher than $i+ \log (d \log n/\badcutF)$.  
As $L$ and $\calD$ will be fixed all along that section, we simply say that a point or a center is badly-cut.
Notice that we do not know which clients are badly-cut. It is however possible to compute the badly-cut centers, since it depends only on $L$ and $\calD$. It is explained how to perform this step in time $\tilde O(nd)$ in \cite{CohenAddadFS19}.

Our algorithm computes a randomized quadtree $\calD$, and finds the badly-cut center $\calB_\calD$.
It removes from the input each cluster associated with a center of $\calB_\calD$. Let $P_\calD$ be the remaining points.
 $P_\calD$ is a random variable that depends on the randomness of~$\calD$.
Given a solution $S$ for $k$-means on $P_\calD$, the algorithm's output is $S \cup \calB_\calD$.

We call $\cost(P, S)$ the cost of any  solution $S$ in the original input 
$P$, and $\cost(P_\calD, S)$ its cost in $P_\calD$.

A key property for our analysis is a bound on the probability of being badly-cut.
\begin{lemma}\label{lem:badcut}
  Any client $p$ has probability at most $\badcut$ to be badly-cut.
  Similarly, a center $f \in L$ has probability at most $\badcutF$ to be badly-cut.
\end{lemma}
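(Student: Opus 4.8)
The plan is to read both bounds straight off \cref{lem:ballCut} by summing a geometric tail, with an extra union bound over scales in the center case.

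\emph{Client case.} First I would fix a client $p$, set $r := \dist(p,\opt)$ and $\ell_0 := \log(rd/\badcut)$, and observe that ``$p$ is badly-cut'' is precisely the event that $B(p,r)$ is cut at \emph{some} level $j > \ell_0$. Since being cut at level $j$ for distinct $j$ are disjoint events (a ball is cut at exactly one level), a union bound and \cref{lem:ballCut} give
$$\Pr[p\text{ badly-cut}] \;\le\; \sum_{j>\ell_0} O\!\left(\frac{dr}{2^{j}}\right) \;=\; O\!\left(\frac{dr}{2^{\ell_0}}\right) \;=\; O\!\left(\frac{dr}{rd/\badcut}\right) \;=\; O(\badcut),$$
the middle step being a geometric sum dominated by its leading term. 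Absorbing the constant into the one hidden in the definition of $\badcut$ (equivalently, shrinking $\badcut$ by a constant factor) gives the claimed bound.

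\emph{Center case.} For $f \in L$, the definition quantifies over scales $2^i$, but only the $O(\log n)$ scales realizable in the truncated quadtree are relevant (for $2^i$ below the minimum cell diameter $\Lambda/n$ the ball almost surely sits inside one leaf, and for $2^i$ above the root's diameter the ball is the whole instance, so in both regimes the badly-cut event for that $i$ is negligible). For a fixed relevant $i$, put $\ell_i := i + \log(d\log n/\badcutF) = \log(2^i\cdot d\log n/\badcutF)$; the same computation as above, now with radius $2^i$, yields
$$\Pr\!\left[B(f,2^i)\text{ cut at level }>\ell_i\right] \;\le\; \sum_{j>\ell_i} O\!\left(\frac{d\,2^{i}}{2^{j}}\right) \;=\; O\!\left(\frac{d\,2^{i}}{2^{\ell_i}}\right) \;=\; O\!\left(\frac{\badcutF}{\log n}\right).$$
A union bound over the $O(\log n)$ relevant values of $i$ then gives $\Pr[f\text{ badly-cut}] = O(\badcutF)$, and again the constant is absorbed.

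\emph{Main obstacle.} The only subtle point, and the one I would spend the most care on, is the scale count in the center case: the per-scale failure probability must be $O(\badcutF/\log n)$ so that it survives the union bound over $\Theta(\log n)$ scales, which is exactly why the definition of a badly-cut center inflates the threshold by $\log(d\log n/\badcutF)$ rather than just $\log(d/\badcutF)$. I would therefore need to (i) pin down precisely which range of $i$ contributes — using that the truncated quadtree has $O(\log n)$ ``meta-levels'' between diameter $\Lambda/n$ and diameter $O(\sqrt d\,\Lambda)$, with the $\sqrt d$ factors cancelling — and (ii) double-check that \cref{lem:ballCut} really controls ``cut at level $>\ell$'' and not merely ``$=\ell$'', so that the geometric tail is genuinely dominated by its first term. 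Everything else is a one-line geometric-series estimate.
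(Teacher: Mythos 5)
Your proof is correct and follows essentially the same line as the paper's: apply \cref{lem:ballCut}, sum the geometric tail, and (for centers) union-bound over scales. In fact you are more careful than the paper's own argument, which dismisses the center case as ``identical'' while implicitly relying on exactly the union bound over the $\Theta(\log n)$ relevant scales that you make explicit --- and you correctly identify that the extra $\log n$ inside the threshold in the definition of a badly-cut center is precisely what is needed so that the per-scale probability $O(\badcutF/\log n)$ survives this union bound.
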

\begin{proof}
  Consider first a point $p \in P$. By \cref{lem:ballCut}, the 
probability that a ball $B(p, r)$ is cut at level at least $j$ is 
at most $d r /2^j$. Hence the probability that a 
ball $B(p,\dist(p, \opt))$
is cut at a level $j$ greater than $\log (\dist(p, \opt)) + \log (d/\badcut)$ is at most $\badcut$.
The proof for $f \in F$ is identical.
\end{proof}

Using that lemma, one can bound the cost of the clusters of facilities from $\calB_\calD$, as well as the cost of badly-cut clients:
\begin{lemma}\label{lem:boundBC}
For any $\pi \in (0, 1)$, it holds with probability $1 - \pi$ that:
\begin{align*}
  \sum_{f \in \calB_\calD} \sum_{p : L(p) = f} \cost(p, f) \leq 3/\pi \cdot \badcutF \cost(P, L), \\
  \sum_{p \bcc} \cost(p, L) \leq 3/\pi \cdot \badcut \cost(P, L)
  \ \text{and} \
  |\calB_\calD| \leq 3/\pi \cdot  \badcutF |L|
  \end{align*}
\end{lemma}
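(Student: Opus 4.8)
The plan is to derive all three inequalities in \cref{lem:boundBC} from \cref{lem:badcut} via Markov's inequality, using the union bound to combine them. First I would observe that each of the three quantities is a nonnegative random variable (over the randomness of $\calD$), and I would compute its expectation. For the first one, by linearity of expectation and \cref{lem:badcut},
\[
\E\left[\sum_{f \in \calB_\calD} \sum_{p : L(p) = f} \cost(p, f)\right] = \sum_{f \in L} \Pr[f \text{ badly-cut}] \cdot \sum_{p : L(p) = f}\cost(p,f) \leq \badcutF \sum_{f \in L}\sum_{p: L(p)=f}\cost(p,f) = \badcutF \cost(P, L).
\]
The same argument applied to clients gives $\E[\sum_{p \bcc}\cost(p,L)] \le \badcut \cost(P,L)$, using that $\Pr[p \text{ badly-cut}] \le \badcut$ and that the cost of client $p$ in $L$ is $\cost(p, L(p)) = \cost(p,L)$. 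For the third quantity, $\E[|\calB_\calD|] = \sum_{f \in L}\Pr[f\text{ badly-cut}] \le \badcutF |L|$.

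Next I would apply Markov's inequality to each. Since these are nonnegative, $\Pr[X > (3/\pi)\E[X]] < \pi/3$ for each of the three variables $X$. By a union bound over the three events, with probability at least $1 - 3\cdot(\pi/3) = 1-\pi$ all three bounds hold simultaneously, which is exactly the statement of the lemma.

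I do not expect any genuine obstacle here; the only points requiring a small amount of care are: (i) making sure the expectation computations exploit linearity over the fixed solution $L$ (which is non-random) so that only the indicator ``$f$ (or $p$) is badly-cut'' carries the randomness, and (ii) being slightly careful about the union bound bookkeeping so that the total failure probability is $\pi$ rather than something larger — which is why the constant $3/\pi$ (rather than $1/\pi$) appears in each bound. One subtlety worth noting is that the events being combined need not be independent, but since we only use a union bound this is irrelevant. If one wanted tighter constants one could instead split the budget unevenly, but the stated $3/\pi$ factor is the clean choice and suffices for all downstream uses.
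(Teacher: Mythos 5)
Your proposal is correct and matches the paper's own proof essentially line-for-line: compute the expectation of each quantity via linearity and \cref{lem:badcut}, apply Markov's inequality with threshold $3/\pi$ so each bound fails with probability at most $\pi/3$, and finish with a union bound. The only cosmetic difference is that the paper groups the expectation for the first quantity by summing over clients $p$ (via $\Pr[L(p)\in\calB_\calD]$) whereas you sum over facilities $f$ first — these are the same sum reorganized, and both are valid.
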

\begin{proof}
Using \cref{lem:badcut}, we have 
$$\E[\sum_{f, p :  L(p) = f} \cost(p, f)] = \sum_{p \in P} \Pr[L(p) \in \calB_\calD] \cost(p, L) \leq \badcutF \cost(P, L).$$
Using Markov's inequality, with probability $1-\pi/3$ the first bullet of the lemma holds. For the same reason, the second bullet holds with probability $1-\pi/3$ as well. Similarly,
$\E[|\calB_\calD|] = \sum_{f \in L} \Pr[f \in \calB_\calD] = \badcutF |L|,$ so applying again Markov's inequality gives that the third bullet holds with probability $1-\pi/3$. A union-bound concludes the proof.
\end{proof}

\begin{lemma}\label{lem:goodInstance}
When $\badcutF = \frac{\pi \alpha}{6}$, $\badcut = \frac{\alpha^3 \pi^3}{144 d^3 \log^2 n}$ and \cref{lem:boundBC} holds:
$$\cost_\calD(P_\calD, \opt) \leq \frac{\alpha}{2} \cost(P, L) +  \frac{O(d^9 \log^4 n)}{\alpha^6 \pi^6} \cdot \cost(P, \opt).$$
\end{lemma}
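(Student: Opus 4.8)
The quantity to bound is $\cost_\calD(P_\calD,\opt)=\sum_{p\in P_\calD}\dist_\calD(p,\opt)^2$, so it suffices to exhibit, for every surviving client $p\in P_\calD$, a center of $\opt$ that is close to $p$ \emph{in the tree metric}. I would split $P_\calD$ according to whether $p$ is badly-cut. The non--badly-cut clients are easy: if $p$ is not badly-cut then, by definition, $B(p,\dist(p,\opt))$ is cut at level at most $\log(\dist(p,\opt)\cdot d/\badcut)$, hence $p$ and $\opt(p)$ share a cell of diameter at most $\sqrt d\cdot 2^{\lceil\log(\dist(p,\opt) d/\badcut)\rceil}\le \frac{2d^{3/2}}{\badcut}\dist(p,\opt)$. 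Squaring and summing over all such clients gives a contribution at most $\frac{4d^3}{\badcut^2}\cost(P,\opt)$, which for $\badcut=\frac{\alpha^3\pi^3}{144 d^3\log^2 n}$ equals $\frac{O(d^9\log^4 n)}{\alpha^6\pi^6}\cost(P,\opt)$, within the target bound.

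\paragraph{The main case: badly-cut clients in $P_\calD$.} Here the tree distance from $p$ to $\opt(p)$ can be arbitrarily large, so I route $p$ through its $L$-center. The key observation is that, since $p$ survived in $P_\calD$, its center $L(p)$ was \emph{not} removed, i.e. $L(p)\notin\calB_\calD$, so $L(p)$ is not badly-cut; and the center-version of ``not badly-cut'' controls distances at \emph{every} scale $2^i$. I invoke it twice. At the scale $2^i\in[\dist(p,L(p)),2\dist(p,L(p))]$ it gives $\dist_\calD(p,L(p))\le\frac{2d^{3/2}\log n}{\badcutF}\dist(p,L(p))$; at the scale $2^i\in[\dist(L(p),\opt),2\dist(L(p),\opt)]$ it gives $\dist_\calD(L(p),\opt(L(p)))\le\frac{2d^{3/2}\log n}{\badcutF}\dist(L(p),\opt)$. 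Using the triangle inequality in the tree metric, $\dist_\calD(p,\opt)\le\dist_\calD(p,L(p))+\dist_\calD(L(p),\opt(L(p)))$, and the triangle inequality in $\R^d$, $\dist(L(p),\opt)\le\dist(p,L(p))+\dist(p,\opt)$, I obtain
\[
\dist_\calD(p,\opt)\ \le\ \frac{2d^{3/2}\log n}{\badcutF}\bigl(2\dist(p,L(p))+\dist(p,\opt)\bigr).
\]
Squaring (bounding $(2a+b)^2\le 8a^2+2b^2$), summing over all badly-cut clients, and using the second bullet of Lemma~\ref{lem:boundBC} to replace $\sum_{p\ \mathrm{badly\text{-}cut}}\dist(p,L(p))^2$ by $\frac{3\badcut}{\pi}\cost(P,L)$, the total becomes $\frac{O(d^3\log^2 n\cdot\badcut)}{\pi\,\badcutF^2}\cost(P,L)+\frac{O(d^3\log^2 n)}{\badcutF^2}\cost(P,\opt)$.

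\paragraph{Finishing and the hard part.} Substituting $\badcutF=\frac{\pi\alpha}{6}$ turns the $\cost(P,\opt)$ summand into $\frac{O(d^3\log^2 n)}{\alpha^2\pi^2}\cost(P,\opt)$, which is swallowed by the $\frac{O(d^9\log^4 n)}{\alpha^6\pi^6}\cost(P,\opt)$ term from the easy case; substituting both $\badcutF$ and $\badcut=\frac{\alpha^3\pi^3}{144 d^3\log^2 n}$ turns the $\cost(P,L)$ summand into $O(\alpha)\cost(P,L)$, and the absolute constants in the definitions are exactly what is chosen so that this is at most $\frac{\alpha}{2}\cost(P,L)$. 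Adding the two cases gives the claimed inequality. The only genuinely delicate point is the badly-cut-but-surviving case: one has to notice that the sole available lever is routing through $L(p)$, which is legitimate \emph{precisely because} $P_\calD$ discards the clusters of $\calB_\calD$, and that the center-level notion of ``badly-cut'' is engineered so that a non--badly-cut center has small tree distortion at all scales at once, which is what lets us bound both the $p$--$L(p)$ leg and the $L(p)$--$\opt$ leg. Everything else (the scale-$2^i$ distortion estimates, the two triangle inequalities, and the $\badcut/\badcutF$ bookkeeping through Lemma~\ref{lem:boundBC}) is routine.
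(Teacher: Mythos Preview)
Your proposal is correct and follows essentially the same approach as the paper: split $P_\calD$ into badly-cut and non--badly-cut clients, handle the former by exploiting that $L(p)$ is not badly-cut (since $p\in P_\calD$), and invoke the second bullet of Lemma~\ref{lem:boundBC}. The only minor difference is that the paper bounds $\dist_\calD(p,\opt(p))$ for badly-cut $p$ in one shot---observing that \emph{both} $p$ and $\opt(p)$ lie in the single ball $B\bigl(L(p),\dist(p,L)+\dist(p,\opt)\bigr)$ and applying the not--badly-cut property of $L(p)$ at that one scale---whereas you invoke it at two scales and add via the tree triangle inequality; this costs an extra constant factor, so your assertion that the constants $6$ and $144$ are ``exactly'' calibrated to produce $\tfrac{\alpha}{2}$ is slightly optimistic, but only $O(\alpha)$ is actually needed downstream.
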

\begin{proof}
We start by showing different bounds for $\cost_\calD(c, \opt)$, according to whether $c$ is badly-cut or not.

When a client $p$ is not badly-cut, we directly have that:
$\dist_\calD(p, \opt) \leq \frac{d \sqrt d\cdot \dist(p, \opt)}{\badcut}$, since the lowest common ancestor of $p$ and $\opt(p)$ has diameter at most $\frac{d \sqrt d\cdot \dist(p, \opt)}{\badcut}$.{\setlength{\emergencystretch}{2.5em}\par}

In the case where $p \in P_\calD$ is badly-cut, we proceed differently. We use that $L(p)$ is not badly-cut as follows.  
Both $p$ and $\opt(p)$ are contained in the ball $B(L(p), \dist(p, L) + \dist(p, \opt))$, since $\dist(L(p), \opt) \leq \dist(p, L) + \dist(p, \opt)$. 
Let $i = \lceil \log(\dist(p, L) + \dist(p, \opt))\rceil$. Since $L(p)$ is not badly-cut, the ball $B(L(p), 2^{i})$ contains $p$ and $\opt(p)$ and is cut at level at most $i + \log(d \log n/ \badcutF)$. 
Hence, $\dist_\calD(p, \opt) \leq \frac{d^{3/2} \log n}{\badcutF} 2^{i} \leq \frac{2 d^{3/2} \log n}{\badcutF} \cdot \left(\dist(p, L) + \dist(p, \opt)\right)$.{\setlength{\emergencystretch}{2.5em}\par}

Since $\cost_\calD(p, \opt) = \dist_\calD(p, \opt)^2$, this implies that 

$$\cost_\calD(p, \opt) \leq \frac{2 d^3 \log^2 n}{\badcutF^2} \cdot \left(2\cost(p, L) + 2\cost(p, \opt)\right) $$
$$\leq \frac{4 d^3 \log^2 n}{\badcutF^2} \cdot \left(\cost(p, L) + \cost(p, \opt)\right)$$

Hence, we have that:
\begin{flalign*}
    &\cost_\calD(P_\calD, \opt) =\!\!\!\!\!\!\!\!\!\!\! \sum_{p \in P_\calD: p \bcc}\!\!\!\!\!\!\!\!\! \cost_\calD(p, \opt) 
+ \!\!\!\!\!\!\!\!\!\!\!\sum_{p \in P_\calD:  p \text{ not badly-cut}} \!\!\!\!\!\!\!\!\!\!\!\! \cost_\calD(p, \opt)&\\
    &\leq \sum_{p \in P_\calD: \text{ badly-cut}} \frac{4 d^3 \log^2 n}{\badcutF^2} \cdot (\cost(p, \opt) + \cost(p, L)) \\
&\phantom{xxx} + \sum_{p \in P_\calD:  p \text{ not badly-cut}} \frac{d^3 \cost(p, \opt)}{\badcut^2}\\
    \leq& \left(\frac{4 d^3 \log^2 n}{\badcutF^2} + \frac{d^3}{\badcut^2}\right)\cdot \cost(P, \opt) + \!\!\!\!\!\!\!\!\!\!\sum_{p \in P: \text{ badly-cut}} \!\!\!\!\!\!\!\!\!\frac{4 d^3 \log^2 n}{\badcutF^2} \cdot \cost(p, L)
\end{flalign*}

Using now \cref{lem:boundBC}, we get: 
\begin{flalign*}
    &\cost_\calD(P_\calD, \opt \cup \calB_\calD)  \leq \left(\frac{4 d^3 \log^2 n}{\badcutF^2} + \frac{d^3}{\badcut^2}\right)\cdot \cost(P, \opt)+& \\
    &\frac{4 d^3 \log^2 n}{\badcutF^2} \cdot \frac{3\badcut}{\pi} \cost(P, L)
    \leq  \frac{\alpha}{2} \cost(P, L) +  \frac{O(d^9 \log^4) n}{\alpha^6 \pi^6}  \cdot \cost(P, \opt).
\end{flalign*}

\end{proof}

That lemma shows that it is enough to compute the optimal solution on $P_\calD$, and add to it the centers of $\calB_\calD$ which can be done by an algorithm similar to the one for $k$-Median:
\begin{algorithm}
\caption{DP-kMeans($P, L$)}\label{alg:kmeans}
\begin{algorithmic}[1]
\State Compute a shifted quadtree $\calD$. Let $r$ be the root of $\calD$.
\State Compute the instance $P_\calD$, $w = $ MakePrivate($\calD, P_\calD$). 
\State Compute $s = $DynamicProgram-KMeans($\calD, w, k, r$)
\State Use the dynamic program table to find a solution $S$ with cost $s$
\State \textbf{Return} $S \cup \calB_\calD$
\end{algorithmic}
\end{algorithm}

The algorithm DynamicProgram-KMeans is exactly the same as DynamicProgram-KMedian \cref{alg:dynKMed}, except that it returns $w(c) \cdot \diam(c)^2$ at step 3, to fit the $k$-means cost. We can now turn to the proof of \cref{lem:main:kmeans}, to show the guarantees ensured by this algorithm.

\begin{proof}[Proof of \cref{lem:main:kmeans}]
We start by showing the quality of approximation. As for $k$-median, the solution $S$ computed at step 5 of \cref{alg:kmeans} is optimal for $P_\calB$ in the quadtree metric with the additional noise. Hence, its cost verifies $\cost(P_\calD, S) \leq \cost_\calD(P_\calD, S) \leq \cost_\calD(P_\calD, \opt) + \frac{k d^2 \log^2 n}{\eps} \cdot \Lambda^2$. 

Now, with probability $1-\pi$ \cref{lem:boundBC} holds. 
In that case, using \cref{lem:goodInstance}, the cost of $P_\calD$ is at most 
$$\cost(P_\calD, S) \leq \frac{\alpha}{2} \cdot \cost(P, L) + \frac{O(d^9 \log^4 n)}{\alpha^6 \pi^6} \cdot \cost(P, \opt)  + \frac{k d^2 \log^2 n}{\eps} \cdot \Lambda^2.$$

Moreover, \cref{lem:boundBC} ensures that $\sum_{f \in \calB_\calD} \sum_{c : L(c) = f} \cost(c, f) \leq \frac{\alpha}{2} \cost(P, L)$. Hence, combining those bounds concludes the lemma.

We now turn to the privacy guarantee. $\calD$ is computed oblivious to the data. Hence, when $P$ changes by one point, $P_\calD$ changes by at most one point as well -- depending whether this point is served by a badly-cut center in $L$. As for $k$-median, Step 3 of the algorithm therefore ensures that the solution computed at step 5 is $\eps$-DP. 
\end{proof}

\paragraph{Going from $(1+\alpha)k$ centers to $k$}
In this last section, we show how to get a true solution to $k$-means, removing the extra $\alpha k$ centers.

For that, we can use the reverse greedy algorithm of Chrobak et al.~\cite{ChrobakKY06}.
This algorithm starts from a set of centers, and iteratively removes the one leading to the smallest cost increase, until there are $k$ centers remaining.
It can be implemented in a private manner as follows:
let $S$ be a set of $O(k)$ centers computed privately. For any center $s$ of $S$, let $w(s)$ be the size of $S$'s cluster, plus a Laplace noise $\laplacenoise(1/\eps)$.
Let $P_S$ be the resulting instance. Informally, any solution on $P_S$ induces a solution with similar cost on $P$, with an additive error $\pm~\cost(P, S) + k\cdot \Lambda^2 / \eps$ -- see \cref{lem:reverseApprox}. Further, since $S$ is private, $P_S$ is private as well -- see \cref{lem:reversePrivate}.

On the weighted instance $P_S$, the reverse greedy algorithm finds a solution with $k$ centers that is an $O(\log k)$-approximation of the optimal solution $\mathcal A$ for that instance, using that $P_S$ contains $O(k)$ distinct points. This is Theorem 2.2 in \cite{ChrobakKY06}.

Now, the optimal solution  $\calA$ on $P_S$ has cost in $P$ at most $\opt + \cost(P, S) + k \Lambda^2 / \eps$, by \cref{lem:reverseApprox}. Hence, combined with \cref{thm:km}, $\mathcal{A}$ has cost $\cost(P, \mathcal{A}) \leq \poly(d, \log n, 1/\alpha) \cdot \opt + k d^2 \log^2 n \log k/ \eps^2 \Lambda^2$. The solution computed by the reversed greedy has therefore cost at most $\poly(d, \log n, 1/\alpha) \cdot \opt + k d^2 \log^2 n \log^2 k/ \eps^2 \Lambda^2$. {\setlength{\emergencystretch}{2.5em}\par}

Before formalizing the argument, we show the two crucial lemmas.
\begin{lemma}\label{lem:reversePrivate}
If solution $S$ is computed via an $\eps$-DP algorithm, then the algorithm computing $P_S$ is $2\eps$-DP.
\end{lemma}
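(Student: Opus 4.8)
The plan is to write the algorithm that outputs $P_S$ as a composition of two stages and apply the composition rules recalled in the preliminaries. Stage one is the given $\eps$-DP subroutine $\mathcal{B}$ that, from the input $P$, produces the center set $S$. Stage two is a map $\mathcal{A}$ that takes $S$ and $P$, assigns each point of $P$ to its nearest center of $S$, computes the size of each resulting cluster, perturbs each size with an independent $\laplacenoise(1/\eps)$ draw, and returns the weighted instance $P_S$ (supported on $S$ with these noisy weights). Since $P_S$ is a deterministic function of $S$ together with the noisy cluster counts — and the composition lemma we will invoke already accounts for the fact that the output may expose $\mathcal{B}(X)$ — it suffices to show that for every fixed $S$ the map $\mathcal{A}(S,\cdot)$ is $\eps$-DP in $P$; then the third composition rule of the preliminaries, namely that $X \mapsto \mathcal{A}(\mathcal{B}(X),X)$ is $(\eps_1+\eps_2)$-DP whenever $\mathcal{B}$ is $\eps_2$-DP and $\mathcal{A}(S,\cdot)$ is $\eps_1$-DP for all $S$, yields $2\eps$-DP overall with $\eps_1=\eps_2=\eps$.

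To prove the inner claim, fix $S$. The centers of $S$ induce a partition of $P$ into clusters, one per center, so computing the noisy weights amounts to running a single $\laplacenoise(1/\eps)$-perturbed counting query on each cluster. Since the clusters are pairwise disjoint parts of the dataset, the ``disjoint parts'' composition rule shows that the whole collection of noisy counts is $\eps$-DP in $P$, each individual noisy count being $\eps$-DP by the Laplace mechanism. Equivalently, on neighboring inputs differing in a single point exactly one cluster changes and its count changes by exactly one, so the histogram of cluster sizes has $\ell_1$-sensitivity $1$. As $P_S$ is obtained from these noisy counts by post-processing (pairing them with the fixed centers of $S$), $\mathcal{A}(S,\cdot)$ is $\eps$-DP, which is what was needed.

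Combining the two stages via the $\mathcal{A}(\mathcal{B}(X),X)$ composition lemma then shows $P \mapsto P_S$ is $2\eps$-DP. The only points requiring care are bookkeeping ones: recognizing that a fixed $S$ yields a genuine partition of $P$, so a neighboring dataset perturbs only one cluster count and one does not incur a factor of $|S| = O(k)$ from composing over clusters; and using the third composition rule rather than plain sequential composition, since $P_S$ depends on both $S$ and $P$ and the second stage cannot be phrased as a standalone mechanism on $P$ alone.
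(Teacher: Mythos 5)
Your proof is correct and mirrors the paper's argument: both fix $S$, apply the Laplace mechanism to each cluster's count, use the disjoint-partition composition rule to get $\eps$-DP for the weights, and then invoke the adaptive composition rule $X \mapsto \mathcal{A}(\mathcal{B}(X),X)$ to add the $\eps$ cost of computing $S$. You spell out the bookkeeping (sensitivity $1$, which composition rule applies) in more detail, but the route is the same.
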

\begin{proof}
Fix some solution $S$. By properties of the Laplace mechanism (see \ref{sec:prelim}), for any center $s$ of $S$ the value of $w(s)$ is computed  on $s$'s cluster is $\eps$-DP. Since all clusters are disjoint, the instance $P_S$ is computed in an $\eps$-DP way. 

Now, $S$ is not fixed but given privately to the algorithm. By composition, the algorithm that computes $P_S$ is $2\eps$-DP.
\end{proof}

\begin{lemma}\label{lem:reverseApprox}
Let $S$ be any solution, and $P_S$ computed as described previously. With high probability, for any set of $k$ centers $T$, $|\cost(P, T) - \cost(P_S,T)| \leq \frac{1}{2} \cdot \cost(P_S, T) + 10 \cost(P, S) + k \log n \Lambda^2 / \eps$. 
\end{lemma}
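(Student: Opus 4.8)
The plan is to compare, client by client, the cost of serving an original point $p\in P$ with the cost of serving its ``snapped'' copy $S(p)$ (its closest center in $S$), and to account separately for the Laplace noise that distinguishes the true cluster sizes from the weights $w(s)$ defining $P_S$. Write $n_s:=|\{p\in P: S(p)=s\}|$ for the true size of the $S$-cluster of $s$, so that $\cost(P_S,T)=\sum_{s\in S}w(s)\,\dist(s,T)^2$ while $\sum_{s\in S}n_s\,\dist(s,T)^2=\sum_{p\in P}\dist(S(p),T)^2$. Throughout I will invoke the standard fact that for $k$-means it is without loss of generality to take every center of $T$ (and of $S$) inside $B(0,O(\Lambda))$: all points of $P$, and all points of $P_S$ (which sit at locations $s\in S$), lie in that ball, so projecting a center onto it only decreases both $\cost(P,T)$ and $\cost(P_S,T)$; this gives $\dist(s,T)\le O(\Lambda)$ for every $s\in S$.

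First I would control the noise. The only randomness in $P_S$ is the collection of $|S|\le k$ independent draws $\laplacenoise_s(1/\eps)$, so by a Laplace tail bound and a union bound over the (at most $n$) centers, with probability $1-1/\poly(n)$ we have $|w(s)-n_s|=|\laplacenoise_s(1/\eps)|\le O(\log n/\eps)$ simultaneously for all $s$. Condition on this event; everything below then holds deterministically and uniformly over all $T$. Using $\dist(s,T)\le O(\Lambda)$,
\[
\Bigl|\sum_{s\in S}w(s)\dist(s,T)^2-\sum_{s\in S}n_s\dist(s,T)^2\Bigr|\le O(\Lambda^2)\sum_{s\in S}|w(s)-n_s|\le O\!\bigl(k\log n\,\Lambda^2/\eps\bigr),
\]
so it remains to relate $\sum_{p}\dist(S(p),T)^2$ to $\cost(P,T)=\sum_p\dist(p,T)^2$.

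For that step I would apply the exact approximate-triangle inequality $(a+b)^2\le(1+\gamma)a^2+(1+1/\gamma)b^2$: for any client $p$ with $s=S(p)$ and any $\gamma>0$, both $\dist(p,T)^2\le(1+\gamma)\dist(s,T)^2+(1+1/\gamma)\dist(p,s)^2$ and $\dist(s,T)^2\le(1+\gamma)\dist(p,T)^2+(1+1/\gamma)\dist(p,s)^2$. Summing over $p$, using $\sum_p\dist(p,S(p))^2=\cost(P,S)$, and combining with the noise estimate above yields
\[
\cost(P,T)\le(1+\gamma)\cost(P_S,T)+(1+\tfrac1\gamma)\cost(P,S)+O(k\log n\,\Lambda^2/\eps),
\]
\[
\cost(P_S,T)\le(1+\gamma)\cost(P,T)+(1+\tfrac1\gamma)\cost(P,S)+O(k\log n\,\Lambda^2/\eps).
\]
Choosing $\gamma$ a small enough constant (so $\gamma\le\tfrac12$, $\gamma(1+\gamma)\le\tfrac12$, and $1+1/\gamma\le 10$), the first inequality directly bounds $\cost(P,T)-\cost(P_S,T)$, and substituting the first into the second bounds $\cost(P_S,T)-\cost(P,T)$ by $\tfrac12\cost(P_S,T)+O(\cost(P,S))+O(k\log n\,\Lambda^2/\eps)$; together these give the stated two-sided bound (the hidden constant in the noise term is absorbed by writing $\log n$ in place of the true $\log(k/\text{failure prob})$, and similarly the $\cost(P,S)$ coefficient is adjusted to $10$).

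\textbf{Main obstacle.} There is no deep difficulty here — this is essentially a ``weighted coreset'' estimate. The only points that need care are (i) that the bound must hold \emph{uniformly over all $T$} with high probability, which is handled by first fixing the high-probability event that all $k$ noise magnitudes are $O(\log n/\eps)$ and then noting the rest of the argument is deterministic and $T$-independent; and (ii) the mild bookkeeping needed so that $\dist(s,T)$ is genuinely $O(\Lambda)$ (replace $\Lambda$ by $O(\Lambda)$ to cover quadtree-cell centers, which live in $[-\Lambda,\Lambda]^d$ rather than exactly in $B(0,\Lambda)$) and so that the constants in the $\tfrac12\cost(P_S,T)$ slack and the $10\cost(P,S)$ term come out as claimed.
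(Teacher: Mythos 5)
Your proposal is correct and follows essentially the same route as the paper: compare $\cost(P,T)$ with the unnoised intermediate quantity $\sum_{p}\dist(S(p),T)^2$ via an approximate triangle inequality, then absorb the Laplace perturbation on cluster weights into the additive $k\log n\,\Lambda^2/\eps$ term. The one cosmetic difference is that the paper cites a single ``generalized triangle inequality'' (Lemma~1 of Cohen-Addad et al., which reads $|\cost(p,T)-\cost(s,T)|\le\gamma\cost(s,T)+\frac{4+\gamma}{\gamma}\cost(p,s)$ and already bounds both directions by $\cost(s,T)$, so summing directly yields the claim after adding the noise), whereas you re-derive an asymmetric variant from $(a+b)^2\le(1+\gamma)a^2+(1+1/\gamma)b^2$ and then symmetrize by substituting one direction into the other; both give the stated constants with a suitable $\gamma$. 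Your write-up is actually somewhat more careful than the paper's: you make explicit the high-probability event, why $\dist(s,T)=O(\Lambda)$ is needed to turn the $O(\log n/\eps)$ weight error into an $O(\log n\,\Lambda^2/\eps)$ cost error, and the uniformity over $T$.
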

\begin{proof}
Without the addition of a Laplace noise, the cost difference between the two solution can be bounded using the following generalization of the triangle inequality (see Lemma 1 in Cohen-Addad et al.~\cite{stoc21}): for any $\eps >0$, any points $p,s$ and set $T$, 
\[|\cost(p, T) - \cost(s, T)| \leq \eps \cost(s, T) + \frac{4+\eps}{\eps} \cost(p, s).\]

For any point $p \in P$ served by some center $s \in S$, we can apply this inequality with $\eps = 1/2$ to get:
\[|\cost(p, T) - \cost(s, T)| \leq\frac{1}{2}\cdot \cost(s, T) + 10 \cost(p, s).\]

Moreover, w.h.p the total noise added is smaller than $k \log n / \eps$, hence contributes at most $k \log n \Lambda^2 / \eps$ to the cost. Summing over all $p$ concludes the proof.
\end{proof}

\begin{lemma}
Let $\calS$ be the solution computed by \cref{thm:km}, and $P_\calS$ the instance computed as described previously.
Applying the reverse greedy algorithm on instance $P_\calS$ is $2\eps$-DP and yields a solution with cost at most $\poly(d, \log n, 1/\alpha) \cdot \opt + k d^2 \log^2 n \log^2 k/ \eps^2 \Lambda^2$ 
\end{lemma}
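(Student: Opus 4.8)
The plan is to assemble the pieces already in place: Lemmas~\ref{lem:reversePrivate} and~\ref{lem:reverseApprox}, the guarantee of Theorem~\ref{thm:km} on $\calS$, and the $O(\log k)$-approximation of the reverse greedy algorithm (Theorem~2.2 in \cite{ChrobakKY06}). For privacy: by Theorem~\ref{thm:km}, $\calS$ is output by an $\eps$-DP algorithm, so by Lemma~\ref{lem:reversePrivate} the weighted instance $P_\calS$ is computed by a $2\eps$-DP procedure; the reverse greedy algorithm is a deterministic function of $P_\calS$ (ties broken by a fixed rule), hence running it is post-processing and the whole pipeline is $2\eps$-DP. (If an overall $\eps$-DP guarantee is wanted, run Theorem~\ref{thm:km} with parameter $\eps/2$.)

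For the cost, write $\opt$ for an optimal $k$-means solution of $P$, let $\calA$ be an optimal $k$-means solution of the weighted instance $P_\calS$, and let $R$ be the output of reverse greedy on $P_\calS$. Since $P_\calS$ is supported on at most $|\calS| \le (1+\alpha)k = O(k)$ distinct points, Theorem~2.2 of \cite{ChrobakKY06} gives $\cost(P_\calS,R) \le O(\log k)\cdot\cost(P_\calS,\calA)$. I would then chain three bounds. First, by optimality of $\calA$ on $P_\calS$ and Lemma~\ref{lem:reverseApprox} applied with $T=\opt$ (moving the $\tfrac12\cost(P_\calS,\opt)$ term to the left-hand side), $\cost(P_\calS,\calA)\le\cost(P_\calS,\opt)\le 2\,\opt + 20\,\cost(P,\calS) + O(k\log n\,\Lambda^2/\eps)$. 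Second, Lemma~\ref{lem:reverseApprox} applied with $T=R$ gives $\cost(P,R)\le \tfrac32\cost(P_\calS,R) + 10\,\cost(P,\calS) + O(k\log n\,\Lambda^2/\eps)$. Combining the three yields $\cost(P,R)\le O(\log k)\cdot\big(\opt + \cost(P,\calS) + k\log n\,\Lambda^2/\eps\big)$.

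Finally I would substitute the guarantee of Theorem~\ref{thm:km}, $\cost(P,\calS)\le\poly(d,\log n,1/\alpha)\cdot\opt + kd^2\log^2 n/\eps\cdot\Lambda^2$ (which holds with probability $3/4$), and collect the polylogarithmic and $\eps$ factors to reach the claimed $\poly(d,\log n,1/\alpha)\cdot\opt + kd^2\log^2 n\log^2 k/\eps^2\cdot\Lambda^2$; a union bound absorbs both the failure probability of Theorem~\ref{thm:km} and the high-probability event in Lemma~\ref{lem:reverseApprox}. The step I expect to need the most care is this cost accounting: Lemma~\ref{lem:reverseApprox} is a weak triangle inequality whose multiplicative coefficient on $\cost(P_\calS,T)$ is at least $1$, so one must be careful to apply it in the direction that lets that term be absorbed and to avoid a circular bound when translating costs $P\to P_\calS\to P$; and one must track how the several additive $\Lambda^2$ contributions --- from $\calS$ via Theorem~\ref{thm:km}, from the Laplace noise defining $P_\calS$ via Lemma~\ref{lem:reverseApprox}, and from the $O(\log k)$ blow-up of reverse greedy --- compose (together with how the privacy budget is split among the subroutines) so as to land exactly on the stated bound rather than a slightly larger one. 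The privacy claim and the applicability of reverse greedy, given that $P_\calS$ has only $O(k)$ distinct support points, are essentially immediate.
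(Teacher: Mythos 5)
Your proposal is correct and follows essentially the same route as the paper: bound $\cost(P_\calS,\calA)$ via Lemma~\ref{lem:reverseApprox} applied to $T=\opt$ (absorbing the $\tfrac12\cost(P_\calS,\opt)$ term), invoke the $O(\log k)$ guarantee of reverse greedy on the $O(k)$-support instance, translate back to $P$ with Lemma~\ref{lem:reverseApprox} applied to the reverse-greedy output, and finally substitute the bound on $\cost(P,\calS)$ from Theorem~\ref{thm:km}, with privacy handled by Lemma~\ref{lem:reversePrivate} plus post-processing. The care you flag about applying Lemma~\ref{lem:reverseApprox} in the non-circular direction and about tracking the additive $\Lambda^2$ terms is exactly the bookkeeping the paper's proof carries out.
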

\begin{proof}
First, the algorithm is $2\eps$-DP, as shown by \cref{lem:reversePrivate}.

Second, let $\calA$ be the optimal solution on the instance $P_\calS$, and $\opt$ be the optimal cost for the full set of points $P$. Applying \cref{lem:reverseApprox}, we get:
\begin{align*}
\cost(P_\calS, \calA) &\leq \cost(P_\calS, \opt) \\
&\leq 2 \cost(P, \opt) + 20 \cost(P, \calS) + 2k \log n \Lambda^2 / \eps\\
&= \poly(d, \log n, 1/\alpha) \cdot \opt + k d^2 \log^2 n \log k/ \eps^2 \Lambda^2.
\end{align*}

We can now bound the cost of the solution computed by the reverse greedy algorithm. As $P_\calS$ is made of $O(k)$ many distinct points, the reverse greedy computes a solution $\tilde \calS$ with cost at most $\cost(P_\calS, \tilde \calS) = O(\log k) \cost(P_S, \calA)$.
Applying again \cref{lem:reverseApprox}, we get
\begin{align*}
\cost(P, \tilde \calS) &\leq \frac{3}{2}\cdot \cost(P_S, \tilde \calS) + 10 \cost(P, \calS) + k \log n \Lambda^2 / \eps\\
&= \poly(d, \log n, 1/\alpha) \cdot \opt + k d^2 \log^2 n \log k/ \eps^2 \Lambda^2,
\end{align*}
which concludes the proof.
\end{proof}

\end{document}